\newcommand{\seqnum}[1]{\href{https://oeis.org/#1}{\rm \underline{#1}}}
\definecolor{webgreen}{rgb}{0,.5,0}
\definecolor{webbrown}{rgb}{.6,0,0}
\DeclareMathOperator{\FO}{FO}
\DeclareMathOperator{\per}{per}
\DeclareMathOperator{\ce}{ce}
\def\Enn{\mathbb{N}}
\begin{document}

\theoremstyle{plain}
\newtheorem{theorem}{Theorem}
\newtheorem{corollary}[theorem]{Corollary}
\newtheorem{lemma}[theorem]{Lemma}
\newtheorem{proposition}[theorem]{Proposition}

\theoremstyle{definition}
\newtheorem{definition}[theorem]{Definition}
\newtheorem{example}[theorem]{Example}
\newtheorem{conjecture}[theorem]{Conjecture}

\theoremstyle{remark}
\newtheorem{remark}[theorem]{Remark}

\title{The First-Order Theory of Binary Overlap-Free Words is Decidable}

\author{Luke Schaeffer\\
Institute for Quantum Computing (IQC) \\
University of Waterloo \\
Waterloo, ON N2L 3G1 \\
Canada\\
\href{lrschaeffer@gmail.com}{\tt lrschaeffer@gmail.com} \\
\ \\
Jeffrey Shallit\\
School of Computer Science \\
University of Waterloo \\
Waterloo, ON N2L 3G1 \\
Canada\\
\href{shallit@uwaterloo.ca}{\tt shallit@uwaterloo.ca}
}

\maketitle

\begin{abstract}
We show that the first-order logical theory of the binary overlap-free
words (and, more generally, the $\alpha$-free words for
rational $\alpha$, $2<\alpha \leq 7/3$), is decidable.
As a consequence, many results previously obtained
about this class through tedious case-based proofs
can now be proved ``automatically'', using a decision 
procedure.     
\end{abstract}

\section{Introduction}

Let $V_k(n)$ be the highest power of $k$ dividing $n$; thus, for 
example, we have $V_2 (48) = 16$.
A famous theorem of B\"uchi \cite{Buchi:1960}, as corrected and clarified by
Bruy\`ere et al.~\cite{Bruyere&Hansel&Michaux&Villemaire:1994},
states that for each
integer $k\geq 2$, the first-order logical theory $\FO(\Enn, +, <, 0, 1, V_k)$
is decidable.  (The logical structure $(\Enn, +, <, 0, 1, V_2)$
is sometimes called {\it B\"uchi arithmetic}; it is an extension of the
more familiar Presburger arithmetic.)
As a consequence, it follows that the first-order theory of
$k$-automatic sequences is decidable.

Recently decidability results have been proved for a number of interesting
infinite classes of infinite sequences.   For the paperfolding sequences,
see \cite{Goc&Mousavi&Schaeffer&Shallit:2015}.  For a class of Toeplitz words,
see \cite{Fici&Shallit:2022}.  For the Sturmian sequences, see 
\cite{Hieryonymi&Ma&Oei&Schaeffer&Schulz&Shallit:2022}.

In this paper we prove that a similar result holds for the
first-order theory of the binary overlap-free words (and, more generally,
for $\alpha$-power-free words for $\alpha$ a rational number
with $2<\alpha\leq 7/3$).    This allows
us to prove (in principle), purely mechanically, assertions about
the factors of such words, compare different overlap-free words, and
quantify over all overlap-free words or appropriate subsets of them.

A version of this
decision algorithm has been implemented using {\tt Walnut}, a theorem-prover
originally designed by Hamoon Mousavi \cite{Mousavi:2016,Shallit:2022},
and we have
used it to reprove various known results about overlap-free words, and
some new ones.  

\section{Definitions and basic concepts}

Let $x = e_{t-1} \cdots e_1 e_0$
be a word over the alphabet $\{0,1,2\}$.  We define
$[x]_2 = \sum_{0 \leq i < t} e_i 2^i$, the value of $x$
when interpreted in base $2$.   The case where the
$e_i \in \{0,1\}$ corresponds to the ordinary binary
representation of numbers; if the digit $2$ is also allowed,
we refer to the {\it extended binary representation}.
For example $[210]_2 = [1010]_2 = 10$.

Let $x = x[0..n-1]$ be a finite word of length $n$.  If $1 \leq p \leq n$
and $x[i]=x[i+p]$ for 
$0 \leq i < n-p$, then we say that $x$ has {\it period\/} $p$.
The least period is called {\it the\/} period, and is denoted
$\per(x)$.  The {\it exponent} of a nonempty word $x$ is defined
to be $|x|/\per(x)$.   
If $\exp(x) = \alpha$, we say that $x$ is an $\alpha$-power.
For example, the word {\tt onion} is a ${5 \over 2}$-power.

The supremum of $\exp(x)$, taken over all finite nonempty factors of $x$,
is called the {\it critical exponent} of $x$, and is denoted $\ce(x)$.
If $\ce(x) < \alpha$, we say that $x$ {\it avoids $\alpha$-powers\/} or
that $x$ is {\it $\alpha$-power-free}.
If $\ce(x) \leq \alpha$, we say that $x$ {\it avoids $\alpha^+$-powers} or
that $x$ is {\it $\alpha^+$-power-free}.   Thus when we talk about
power-freeness, we are using a sort of ``extended reals'', under
the agreement that $e < e^+ < f$ for all $e<f$; this very useful notational
convention was apparently introduced by Kobayashi \cite[p.~186]{Kobayashi:1986}.
These concepts extend seamlessly to infinite words.
A {\it square\/} is a $2$-power; an example in English is {\tt murmur}.
The {\it order\/} of a square $xx$ is defined to be $|x|$.

An {\it overlap\/} is a word of the form $axaxa$, where $a$ is a single
letter and $x$ is a possibly empty word.  For example, the French
word {\tt entente} is an overlap.   If a word has no factor that
is an overlap, we say it is {\it overlap-free}.  Equivalently, a word
is overlap-free iff it avoids $2^+$-powers.  Much of what we know
about overlap-free words is contained in Thue's seminal
1912 paper \cite{Thue:1912,Berstel:1995}.  For more recent
advances, see 
\cite{Fife:1980,Restivo&Salemi:1983,Restivo&Salemi:1985a,Carpi:1993a,Shur:2000,Rampersad:2007}.

The most famous infinite binary overlap-free word is 
$${\bf t} = 01101001\cdots, $$
the Thue-Morse sequence.
It satisfies the equation
${\bf t} = \mu({\bf t})$, as does its binary complement
$\overline{\bf t}$, where $\mu$ is the {\it Thue-Morse morphism}
mapping $0$ to $01$ and $1$ to $10$.  

We write $\mu^n$ for the $n$-fold composition of $\mu$ with itself.

A theorem of Restivo and Salemi \cite{Restivo&Salemi:1985a} provides
a structural description of finite and infinite binary overlap-free words
in terms of the Thue-Morse morphism $\mu$.
This result was extended to all powers $2 < \alpha \leq {7 \over 3}$ in
\cite{Karhumaki&Shallit:2004}, as follows:

\begin{theorem}
Let $S = \{ \epsilon, 0, 1, 00, 11 \}$.
Let $2 < \alpha \leq {7 \over 3}$ be a rational number $(p/q)$ 
or extended rational $(p/q)^+$.

\begin{itemize}
\item[(a)]
Suppose $w$ is a finite binary $\alpha$-free word.  
Then there exist words 
$$x_0, x_1, \ldots, x_n, y_0, y_1, \ldots, y_{n-1} \in S$$
such that $x = x_0 \mu(x_1) \mu^2(x_2) \cdots \mu^n(x_n) \mu^{n-1}(y_{n-1})
\cdots \mu(y_1) y_0$.

\item[(b)]
Suppose $\bf w$ is an infinite binary $\alpha$-free word.
Then there exist infinitely many words $x_0, x_1, \ldots$ such that
${\bf w} = x_0 \mu(x_1) \mu^2(x_2) \cdots$, or finitely many
words $x_0, x_1, \ldots, x_n$ such that
${\bf w} = x_0 \mu(x_1) \mu^2(x_2) \cdots \mu^n(x_n) {\bf t}$ or
${\bf w} = x_0 \mu(x_1) \mu^2(x_2) \cdots \mu^n(x_n) \overline{\bf t}$.
\end{itemize}
\label{restivo}
\end{theorem}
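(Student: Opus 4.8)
The plan is to prove part (a) by induction on $|w|$, reducing everything to a single \emph{desubstitution} step, and then to bootstrap part (b) from the same machinery. The engine is the following lemma, which I would isolate first: if $w$ is $\alpha$-free and $|w| \ge 3$, then $w = u\,\mu(w')\,v$ for some $u, v \in S$ and some $\alpha$-free word $w'$ with $|w'| < |w|$. Granting this, part (a) is immediate: apply the lemma to obtain $w = x_0\, \mu(w^{(1)})\, y_0$ with $x_0, y_0 \in S$, apply it again to the $\alpha$-free word $w^{(1)}$, and so on. Since $|w^{(i+1)}| \le |w^{(i)}|/2$, the lengths strictly decrease, so after finitely many steps the central word $w^{(n)}$ lies in $S$ and we set $x_n = w^{(n)}$. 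Expanding the nested images and reading the prefix pieces as $x_0, \mu(x_1), \mu^2(x_2), \dots$ and the suffix pieces as $\dots, \mu(y_1), y_0$ yields exactly the asserted factorization.

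To prove the desubstitution lemma I would first record that an $\alpha$-free word with $\alpha \le 7/3 < 3$ contains no factor $000$ or $111$ and has bounded critical exponent, so that long alternating factors such as $01010$ and $10101$ are excluded as well. These constraints force a rigid block structure: writing the \emph{blocks} of $w$ as the positions $i$ with $w[i] = w[i+1]$, I would show that every block lying away from the two ends of $w$ occurs at a position of one fixed parity. Consequently the letters of $w$ can be grouped into consecutive pairs $01$ and $10$ --- that is, into the images $\mu(0)$ and $\mu(1)$ --- once a short prefix $u$ and a short suffix $v$ are deleted, where $u$ and $v$ absorb both the parity correction (at most one letter) and a possible unpairable edge block (the cases giving $u$ or $v$ equal to $00$ or $11$). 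The decoded sequence of pair-types is the desired $w'$. Its $\alpha$-freeness follows from the reflection property of $\mu$: since a period $p$ of a factor $z$ yields a period $2p$ of $\mu(z)$, we have $\exp(\mu(z)) \ge \exp(z)$, so any $\alpha$-power in $w'$ would lift to an $\alpha$-power inside $\mu(w')$, which is a factor of the $\alpha$-free word $w$ --- a contradiction. The same inequality treats the extended-rational exponents $(p/q)^+$ without change.

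The main obstacle is the combinatorial heart of this lemma: showing that the interior parity is genuinely consistent and that the leftover boundary pieces can always be kept in $S$, i.e.\ of length at most $2$. A conflict of interior parities, or an overlong boundary, produces a short factor whose exponent I would bound from below and show exceeds $\alpha$; the fact that this bound works for \emph{every} $\alpha$ in the range $(2, 7/3]$ --- but fails just above it --- is exactly where the threshold $7/3$ enters, the extremal obstruction being the $7/3$-power $(001)^{7/3} = 0010010$ together with its images under $\mu$. I expect this case analysis, carried out over the finitely many relevant short factors, to be the most delicate part of the argument.

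For part (b) I would run the one-sided version of the desubstitution lemma, obtaining $\mathbf{w} = x_0\, \mu(\mathbf{w}^{(1)})$, then $\mathbf{w}^{(1)} = x_1\, \mu(\mathbf{w}^{(2)})$, and so on, with each $x_i \in \{\epsilon, 0, 1\} \subseteq S$ (no right boundary now survives, so no length-$2$ prefix piece is needed). If infinitely many $x_i$ are nonempty the process never stabilizes and we obtain the infinite product $\mathbf{w} = x_0\, \mu(x_1)\, \mu^2(x_2)\cdots$. Otherwise all $x_i$ with $i \ge n$ are empty, so the tail of $\mathbf{w}$ lies in $\mu^k(\{0,1\}^\omega)$ for every $k$; I would then invoke the fact that the only infinite binary words invariant under iterated trivial desubstitution are the two fixed points of $\mu$, namely $\mathbf{t}$ and $\overline{\mathbf{t}}$, which produces the remaining two forms. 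Convergence of the infinite product and the identification of the invariant tail are routine once the desubstitution lemma is in hand.
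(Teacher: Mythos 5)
First, a point of comparison: the paper does not prove Theorem~\ref{restivo} at all --- it is quoted from Restivo and Salemi, as extended by Karhum\"aki and Shallit \cite{Karhumaki&Shallit:2004}, so your proposal must be measured against that cited proof. Your overall strategy matches it: isolate a desubstitution lemma $w = u\,\mu(w')\,v$ with $u,v \in S$ and $w'$ again $\alpha$-free, iterate it, and pass exponents through $\mu$ via $\exp(\mu(z)) \geq \exp(z)$. That skeleton, including the parity-of-blocks analysis and the one-sided variant for part (b), is exactly how the cited result is established. However, as written your proposal has one genuine incompleteness and one genuine error. The incompleteness: the desubstitution lemma \emph{is} the theorem, and you explicitly defer its combinatorial heart (interior parity consistency, boundary pieces of length at most $2$, and the role of the threshold $7/3$) to an unspecified ``case analysis over the finitely many relevant short factors.'' This is precisely where all the work lies in \cite{Karhumaki&Shallit:2004}, so the proposal is a correct plan rather than a proof. (A small slip in the same induction: your recursion terminates when the central word has length $\leq 2$, but $01, 10 \notin S$; this is trivially repaired by one more desubstitution step, since $01 = \mu(0)$ and $10 = \mu(1)$.)

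The error is in part (b): you claim that for one-sided infinite words each $x_i$ can be taken in $\{\epsilon, 0, 1\}$ because ``no right boundary survives.'' This confuses which end produces the length-$2$ pieces: $00$ and $11$ arise as \emph{left}-boundary pieces $u$, and they are genuinely needed. A counterexample appears in this very paper: the lexicographically least overlap-free word $\mathbf{w} = 001001\,\overline{\mathbf{t}}$ admits no factorization $u\,\mu(\mathbf{w}')$ with $u \in \{\epsilon, 0\}$ (with $u = \epsilon$ the first pair is $00$, and with $u = 0$ the second pair is $00$, neither of which is $\mu(0)$ or $\mu(1)$), whereas $u = 00$ works. Fortunately this mistake is harmless to the theorem, which permits $x_i \in S$; your two limit arguments --- convergence of the infinite product when infinitely many $x_i$ are nonempty, and the identification of $\bigcap_{k \geq 0} \mu^k\bigl(\{0,1\}^\omega\bigr)$ with $\{\mathbf{t}, \overline{\mathbf{t}}\}$ via the fact that $\mu$ preserves first letters, so the length-$2^k$ prefix is $\mu^k(a)$ for a fixed letter $a$ --- go through unchanged with $x_i \in S$. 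So: right route, with a fixable error in (b) and the decisive lemma left unproved.
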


Of course, not all sequences of choices of the $x_i$ and $y_i$ result
in overlap-free (or $\alpha$-power-free) words.  For example,
taking $x_0 = 0$, $x_1 = 11$, $y_0 = 1$ gives the word
$0\, \mu(11) \, 1 = 0 \, 10 \, 10 \, 1 = (01)^3$.
See \cite{Carpi:1993a,Cassaigne:1993,Jungers&Protasov&Blondel:2009,Blondel&Cassaigne&Jungers:2009,Shallit:2011a,Rampersad&Shallit&Shur:2011} for more details.

\section{Decidability for binary overlap-free words}
\label{decision}

Theorem~\ref{restivo} is our basic tool.  We
code the words $x_i$ and $y_i$ with the following correspondence:
\begin{align*}
g(1) &= \epsilon  \\
g(2) &= 0  \\
g(3) &= 1 \\
g(4) &= 00 \\
g(5) &= 11  .
\end{align*}
The finite code $c_0 c_1 \cdots c_t \in \{1,2,3,4,5\}^*$
is understood to specify the finite {\it Restivo word\/}
$$R(c_0 c_1 \cdots c_t) 
= g(c_0) \mu(g(c_1)) \mu^2(g(c_2)) \cdots \mu^t(g(c_t))$$
and the infinite code $c_0 c_1 \cdots \in \{1,2,3,4,5\}^\omega$
is understood the specify the infinite Restivo word
$$R(c_0 c_1 \ldots) = g(c_0) \mu(g(c_1)) \mu^2(g(c_2)) \cdots.$$
Thus the Restivo words correspond to ``one-sided'' part (a)
of Theorem~\ref{restivo}.

Similarly, the finite codes $c_0 c_1 \cdots c_t,
d_0 d_1 \cdots d_{t-1} \in \{1,2,3,4,5\}^*$ are understood to specify
the finite {\it Salemi word\/}
$$ S(c_0 c_1 \cdots c_t, d_0 d_1 \cdots d_{t-1}) =
g(c_0) \mu(g(c_1)) \mu^2(g(c_2)) \cdots
\mu^t(g(c_t)) \mu^{t-1} (g(d_{t-1})) \cdots
\mu^1 (g(d_1)) g(d_0).$$
Thus, the Salemi words correspond to the ``two-sided'' part (b)
of Theorem~\ref{restivo}.

We emphasize that we do {\it not\/} require that Restivo words and Salemi
words be overlap-free, only that they are of the form given above
with the $c_i, d_i \in S$.

We prove the following results:
\begin{theorem}
Let $N_{c,d}$ be the structure
$(\Enn, <, +, 0, 1, n \rightarrow V_2(n), n \rightarrow S(c,d)[n] )$, where
we augment B\"uchi arithmetic by a finitely coded Salemi word $S(c,d)$.
Let $K_{\rm finite} = \{ N_{c,d} \, : \, c,d \in \{1,2,3,4,5\}^* \}$.
Then the first-order logical theory $\FO(K_{\rm finite})$ is
decidable.
\label{thm1}
\end{theorem}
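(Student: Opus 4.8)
The plan is to reduce the first-order theory of the whole class $K_{\rm finite}$ to B\"uchi arithmetic $(\Enn, +, <, 0, 1, V_2)$, whose theory is decidable by the theorem of B\"uchi \cite{Buchi:1960}, as corrected by Bruy\`ere et al.\ \cite{Bruyere&Hansel&Michaux&Villemaire:1994}. Since the stated goal is to quantify over all Salemi (hence all overlap-free) words, the natural move is to internalize the codes $c$ and $d$ as quantifiable natural-number parameters and then to exhibit a single first-order formula $\Phi$, over B\"uchi arithmetic together with these parameters, that defines the relation $S(c,d)[n]=v$. Once such a $\Phi$ is in hand, every sentence about $K_{\rm finite}$ (including quantification over the codes themselves) translates into a B\"uchi-arithmetic sentence, and decidability follows.

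First I would fix a scale-aligned encoding of the codes. Since each symbol $c_i$ determines $g(c_i)\in S$, hence a length $\ell_i=|g(c_i)|\in\{0,1,2\}$ and, when $g(c_i)\neq\epsilon$, a letter $a_i\in\{0,1\}$, I would record $c_i$ using a bounded number of bits placed at bit-position $i$ of a constant number of natural-number parameters, and similarly for $d_i$; the first-order constraint that only valid symbol patterns occur is easy to impose. In this encoding ``the code symbol at scale $i$'' becomes a bit-extraction, which is first-order definable using $V_2$, $+$, and $<$, and the finiteness of a code corresponds exactly to the finiteness of the bits of a natural number. The essential feature is that scale $i$ of the construction, governed by $\mu^i$, now coincides with bit-position $i$ of the encoding.

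Next I would carry out the length and block bookkeeping. The cumulative left length $L_c(k)=\sum_{i<k}\ell_i 2^i$ is an extended binary number in the sense of the introduction, its digits $\ell_i$ lying in $\{0,1,2\}$; I would show that $L_c(k)$, the total left length $L_c$, and the analogous right quantities are all first-order definable from the parameters (with the right blocks appearing in order of decreasing scale). Given a position $n$ in $S(c,d)$, a chain of first-order comparisons of the form $L_c(k)\le n<L_c(k+1)$ decides whether $n$ lies in the left or right part and locates the block containing $n$, together with the offset $j$ within that block. Within block $k$ the content is $\mu^k(g(c_k))$, and since $\mu^k(0)$ is the length-$2^k$ prefix of ${\bf t}$ and $\mu^k(1)$ its complement, the letter at offset $j$ equals $t[j']\oplus a_k$, where $j'\equiv j\pmod{2^k}$ and $0\le j'<2^k$. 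The reduction modulo $2^k$ is expressible through $V_2$, and $t[j']=v$ is first-order definable in B\"uchi arithmetic because ${\bf t}$ is $2$-automatic. Assembling these pieces yields $\Phi$, and the theorem follows.

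The hard part will be the block bookkeeping of the previous paragraph, and specifically the carries created by the length-$2$ words $00$ and $11$. Because these make $L_c(k)$ a genuine extended binary number rather than a sum of distinct powers of $2$, neither the block of $n$ nor the within-copy offset $j'=j\bmod 2^k$ can be read off directly from the bits of $n$; instead one must perform extended binary comparison and subtraction inside B\"uchi arithmetic, and do so uniformly in the code parameters so that one formula serves all codes at once. Managing these carries is the principal technical obstacle, and it is exactly where the extended binary representation introduced at the start of the paper does the essential work.
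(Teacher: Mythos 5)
Your proposal is correct and takes essentially the same route as the paper: both reduce everything to locating the block containing position $n$ by comparing $n$ against cumulative block lengths written in extended binary via the length morphism $h(i)=|g(i)|$ (the paper's explicit ``normalizer'' automaton does exactly the carry-handling you flag as the main obstacle, and your encoding makes this easy since a digit string over $\{0,1,2\}$ is the digitwise sum of two ordinary binary numbers, so normalization reduces to addition), after which the block content is read off the $2$-automatic Thue--Morse word. The only presentational differences are that the paper packages the relation $S(c,d)[n]=v$ as an explicit lookup DFAO reading the codes and $n$ in parallel---interchangeable with your direct B\"uchi-arithmetic formula by the B\"uchi--Bruy\`ere correspondence---and handles the right half of the Salemi word by the reversal identity $\mu^{t-1}(g(d_{t-1}))\cdots g(d_0)=w^R$ with alternating complementation (so as to reuse the Restivo lookup machinery), where you instead do symmetric decreasing-scale bookkeeping, which is equally valid since the needed truncations and subtractions are definable.
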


\begin{theorem}
Let $N'_{\bf c}$ be the structure
$(\Enn, <, +, 0, 1, n \rightarrow V_2(n), n \rightarrow R({\bf c})[n] )$, where
we augment B\"uchi arithmetic by a Restivo word $R({\bf c})$ with infinite
code $\bf c$.   Let $K_{\rm infinite} = \{ N'_{\bf c} \, : \, {\bf c} \in
\{1,2,3,4,5\}^\omega \}$.
Then the first-order logical theory $\FO(K_{\rm infinite})$ is
decidable.
\label{thm2}
\end{theorem}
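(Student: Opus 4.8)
The plan is to show that the map $({\bf c}, n) \mapsto R({\bf c})[n]$ is computed by a finite automaton that reads $\bf c$ and the base-$2$ representation of $n$ synchronously, and then to lift this, together with B\"uchi's theorem, through the standard $\omega$-automaton translation of first-order logic. The enabling observation is the self-similar recursion
$$ R({\bf c}) = g(c_0)\, \mu\bigl( R(\sigma {\bf c}) \bigr), $$
where $\sigma$ denotes the shift $\sigma(c_0 c_1 c_2 \cdots) = c_1 c_2 \cdots$; this is immediate from $R({\bf c}) = g(c_0)\mu(g(c_1))\mu^2(g(c_2))\cdots$ by factoring out one copy of $\mu$. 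Since $\mu(a)[2m]=a$ and $\mu(a)[2m+1]=\overline{a}$ for a letter $a$, the recursion lets us read off $R({\bf c})[n]$ by peeling off the length-$|g(c_0)|$ prefix $g(c_0)$ and then undoing one application of $\mu$: if $n<|g(c_0)|$ the answer is $g(c_0)[n]$, and otherwise it equals $R(\sigma{\bf c})[\lfloor (n-|g(c_0)|)/2\rfloor]$, complemented exactly when $n-|g(c_0)|$ is odd.

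The heart of the argument is to turn this recursion into a finite-state procedure. Writing $d_k=|g(c_k)|\in\{0,1,2\}$ and reading the bits of $n$ least-significant-first in lockstep with the symbols $c_0,c_1,c_2,\ldots$, the recursion maintains a position $m_k$ with $m_0=n$ and $m_{k+1}=\lfloor (m_k-d_k)/2\rfloor$, halting at the first index with $m_k<d_k$ and outputting $g(c_k)[m_k]$ exclusive-or'd with an accumulated complementation bit. The key point is that if we split $n$ at bit $k$ as $n=(n\bmod 2^k)+2^k\lfloor n/2^k\rfloor$, then $m_k=\lfloor n/2^k\rfloor+\epsilon_k$, where the correction obeys $\epsilon_{k+1}=\lfloor (b_k+\epsilon_k-d_k)/2\rfloor$ (with $b_k$ the $k$-th bit of $n$ and $\epsilon_0=0$) and stays in a bounded range of integers. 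Hence the only state an automaton must remember is $\epsilon_k$ together with the complementation bit, so the relation $\{({\bf c},n) : R({\bf c})[n]=1\}$ is recognized by a B\"uchi automaton over $\{1,2,3,4,5\}\times\{0,1\}$ that consumes the finite representation of $n$ (padded with high-order zeros) and the infinite code $\bf c$, commits to its answer once $\lfloor n/2^k\rfloor$ reaches $0$, and thereafter moves to an accepting or rejecting sink. I would record this as the crucial lemma of the section. To guarantee that every position $n$ actually lies in the word, one restricts to codes with infinitely many symbols $\neq 1$, an $\omega$-regular side condition that is harmlessly conjoined throughout.

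With this lemma in hand, the rest is the routine automata-theoretic reading of first-order logic. By B\"uchi's theorem \cite{Buchi:1960,Bruyere&Hansel&Michaux&Villemaire:1994}, every relation definable in $(\Enn,<,+,0,1,V_2)$ is recognized by a finite automaton reading base-$2$ representations synchronously, which I regard as a B\"uchi automaton by padding the finite number-tracks with zeros. For a formula $\phi(x_1,\ldots,x_m)$ I would construct, by induction on its structure, a B\"uchi automaton over $\{1,2,3,4,5\}\times\{0,1\}^m$ accepting exactly those $({\bf c},x_1,\ldots,x_m)$ with $N'_{\bf c}\models\phi$: atomic formulas are handled by the base case and the Restivo lemma, the connectives $\wedge,\vee,\neg$ by intersection, union, and B\"uchi complementation, and $\exists x_i$ by projecting away the $i$-th number-track. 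Applied to a sentence, this yields a B\"uchi automaton $A_\phi$ over $\{1,2,3,4,5\}$ with $L(A_\phi)=\{{\bf c} : N'_{\bf c}\models\phi\}$, and $\phi\in\FO(K_{\rm infinite})$ holds precisely when $A_{\neg\phi}$ accepts no admissible code, i.e.\ when $L(A_{\neg\phi})=\emptyset$; since emptiness of B\"uchi automata is decidable, so is $\FO(K_{\rm infinite})$.

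I expect the construction and verification of the Restivo automaton --- in particular, pinning down the bounded range of the correction term $\epsilon_k$ and the exact halting behaviour, so that the machine commits to the correct letter and correctly identifies the block containing $n$ --- to be the main obstacle. Once that finite-state description is secured, the logical layer is entirely standard and parallels the treatment of the paperfolding sequences in \cite{Goc&Mousavi&Schaeffer&Shallit:2015}.
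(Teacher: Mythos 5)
Your proposal is correct, and its overall architecture --- construct a (B\"uchi) lookup automaton that reads the code $\bf c$ and the binary expansion of $n$ in parallel, then run the standard B\"uchi--Bruy\`ere translation of first-order formulas into automata, closing under Boolean operations and projection and deciding emptiness --- is exactly the paper's. Where you genuinely diverge is in how the lookup automaton is built. The paper locates the block $\mu^i(g(c_i))$ containing position $n$ \emph{globally}: it applies the length morphism $h$ with $h(j)=|g(j)|$, interprets $h(c_{i-1}\cdots c_1 c_0)$ as a base-$2$ number over the digit set $\{0,1,2\}$, guesses $i$ existentially subject to $[h(c_{i-1}\cdots c_0)]_2\le n<[h(c_i\cdots c_0)]_2$, verifies the comparison with a two-state ``normalizer'' (carry) automaton, and reduces the in-block lookup to the two-state Thue--Morse DFAO via the fact that $\mu^i(0)$ is the length-$2^i$ prefix of $\bf t$. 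You instead exploit the self-similarity $R({\bf c})=g(c_0)\,\mu(R(\sigma{\bf c}))$ \emph{locally}, consuming one code symbol per bit of $n$, with the bounded correction term $\epsilon_k$ (which indeed stays in $\{-2,-1,0\}$) absorbing the discrepancy between $\lfloor n/2^k\rfloor$ and the true block-relative position $m_k$; in effect your $\epsilon_k$ plays the role of the paper's normalizer, whose states are precisely carry values, and your complement-parity flag plays the role of its Thue--Morse lookup, fused into a single product automaton. The trade-off: the paper's modular decomposition into small definable predicates is what makes its {\tt Walnut} implementation practical, while your monolithic construction is more self-contained and makes the finite-state character of $({\bf c},n)\mapsto R({\bf c})[n]$ evident without detouring through definability of auxiliary relations. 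Two details to nail down, both fixable: first, detecting the halting step $m_k<d_k$ depends on the unread high-order bits of $n$ (e.g., $m_k=1<d_k=2$ can occur while $\lfloor n/2^k\rfloor\in\{1,2,3\}$, for $\epsilon_k=0,-1,-2$ respectively), so the automaton must nondeterministically guess the halting block and then verify that the remaining bits of $n$ encode the small number $m_k-\epsilon_k$ --- your phrase about committing once $\lfloor n/2^k\rfloor$ reaches $0$ understates this slightly; second, your side condition excluding codes with only finitely many symbols $\neq 1$ shrinks the class $K_{\rm infinite}$, which in the theorem ranges over all of $\{1,2,3,4,5\}^\omega$ including codes of finite words --- it is cleaner to adopt a default value for out-of-range positions, as the paper's implementation does with its {\tt LOOK} automaton returning $2$.
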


\begin{proof}[Proof of Theorems~\ref{thm1} and \ref{thm2}.]
The basic strategy of our decision procedure can be found in the
papers of B\"uchi \cite{Buchi:1960} and Bruy\`ere et 
al.~\cite{Bruyere&Hansel&Michaux&Villemaire:1994} mentioned
previously.  Since B\"uchi arithmetic itself is decidable,
and is powerful enough to express the computations of a deterministic
finite automaton (DFA) or deterministic finite automaton
with output (DFAO), it suffices to construct a 
DFAO
computing $n \rightarrow S(c,d)[n]$ and $n \rightarrow R({\bf c})[n]$.
Here the automata take the words coding $c, d, {\bf c}$ and $n$ (in binary) in
parallel, and compute the $n$'th bit of the corresponding word.
We call these the {\it lookup automata}.
For the Salemi words we use ordinary finite automata,
and for infinite binary words we use B\"uchi automata.

We construct the lookup automata in stages. First we describe
how to compute the lookup automaton
for the finite Restivo word
$$R(c_0 c_1 \cdots c_t) 
= g(c_0) \mu(g(c_1)) \mu^2(g(c_2)) \cdots \mu^t(g(c_t)) .$$

Given $n$, our first task is to determine in which factor the index
$n$ lies.
To achieve this, we observe that $|\mu^j (g(i))| = 2^j |g(i)| = a \cdot 2^i$
for $i \in \{1,2,3,4,5\}, a \in \{0,1,2\}$.   Defining the morphism $h$ as follows:
\begin{align*}
h(1) &= 0 \\
h(2) &=1 \\
h(3) &= 1\\
h(4) &= 2\\
h(5) &= 2,
\end{align*}
we see that $h(i) = |g(i)|$.
If we now
interpret $h(c_t \cdots c_1 c_0)$ as a generalized base-$2$ number
with the digit set $\{0,1,2\}$, we see that the $n$'th symbol of
$R(c_0 c_1 \cdots c_t) $ 
is equal to the $n-k$'th symbol of $\mu^i(g(c_i))$,
where
\begin{equation}
[h(c_{i-1} \cdots c_1 c_0)]_2 \leq n < [h(c_i \cdots c_1 c_0)]_2 ,
\label{ineq}
\end{equation}
and $k = [h(c_{i-1} \cdots c_1 c_0)]_2$.   Here all words are indexed
starting at position $0$.    We can find the appropriate
$i$ with an existential quantifier that checks the inequalities
\eqref{ineq}.  

Since $n$ is given in binary, we need 
a normalizer that takes as input two strings in parallel, one over
the larger digit set $\{0,1,2\}$ and one over the ordinary digit
set $\{0,1\}$, and accepts if they represent the same number when
considered in base $2$.  This is done with the automaton in 
Figure~\ref{fig0}.  Correctness of this automaton is easily proved
by induction on the length of the input, using the fact that
state $0$ corresponds to ``no carry'' and state $1$ corresponds
to ``carry expected''.
\begin{figure}[H]
\begin{center}
\includegraphics[width=4in]{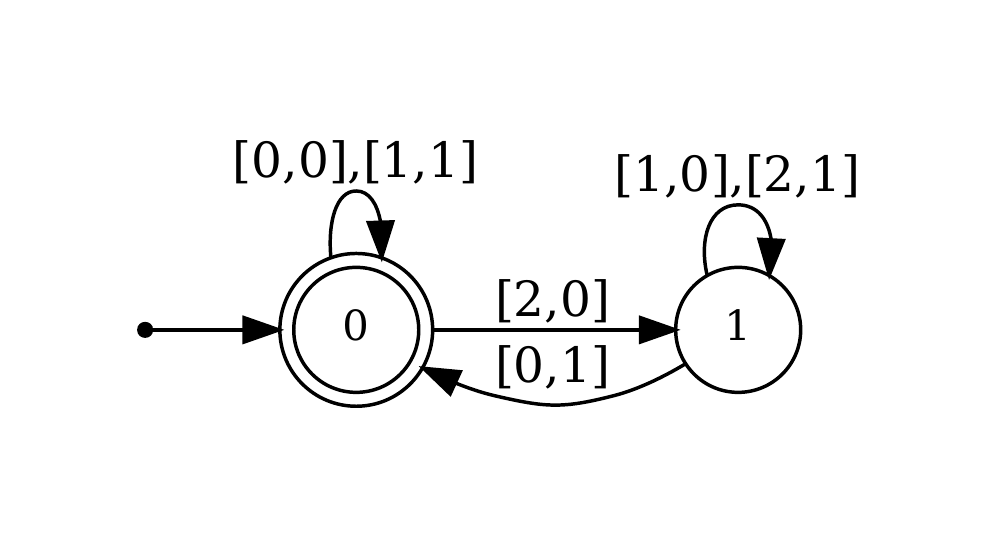}
\end{center}
\caption{Normalizer for base-$2$ expansions.}
\label{fig0}
\end{figure}

The final piece is the observation that the first $2^i$ bits of
$\bf t$ are just $\mu^i(0)$, and the first $2^i$ bits of
$\overline{\bf t}$ are $\mu^i(1)$.   Since a $2$-state automaton can compute
the $n$'th bit of $\bf t$ (or $\overline{\bf t}$), we can 
determine the appropriate bit.

Exactly the same idea works for the infinite Restivo words, except
now the code is an infinite word, so we need to use a B\"uchi automaton
in order to process it correctly.

The finite Salemi words are only slightly more complicated.
Here we use the (easily-verified) fact that 
$$ \mu^{t-1} g(d_{t-1}) \cdots
\mu^1 (g(d_1)) g(d_0) = w^R,$$ where
$$w =
\begin{cases}
g(d_0) \mu(\overline{g(d_1)}) \mu^2(g(d_2)) \mu^3(\overline{g(d_3)}) \cdots
        \mu^{t-1} (g(d_{t-1})), & \text{if $t$ odd}; \\
g(d_0) \mu(\overline{g(d_1)}) \mu^2(g(d_2)) \mu^3(\overline{g(d_3)}) \cdots
        \mu^{t-1} (\overline{g(d_{t-1})}), & \text{if $t$ even}.
\end{cases} $$
On input $n$, we use the lengths of the finite words
$g(c_0) \mu(g(c_1)) \mu^2(g(c_2)) \cdots
\mu^t(g(c_t))$ and $ \mu^{t-1} g(d_{t-1}) \cdots
\mu^1 (g(d_1)) g(d_0)$ to decide where the $n$'th symbol lies,
and then appeal to the lookup automaton for $R(c_0 \cdots c_t)$,
or its modification for $w$,, to compute the appropriate bit.

This completes our sketch of the decision procedure.
\end{proof}

For an infinite word $\bf x$, we can write a first-order
formulas asserting that $\bf x$ has an overlap (resp., has a $p/q$-power),
as follows:
\begin{align*}
\exists i,n \ (n\geq 1) \ \wedge\ \forall t \ (t\leq n) \implies 
{\bf x}[i+t]={\bf x}[i+t+n] \\
\exists i,n \ (n\geq 1) \ \wedge\ \forall t \ (qt<(p-q)n) \implies 
{\bf x}[i+t]={\bf x}[i+t+n] .
\end{align*}
Here $p$ and $q$ are positive integer constants and an expression
like $qt$ is shorthand for $\overbrace{t+t+\cdots+t}^{q\ \rm times}$.

So, incorporating these two formulas into larger first-order logical formulas
asserting that a given code specifies an overlap-free word (or
$\alpha$-free word
for rational or extended rational
$\alpha$ with $2 < \alpha \leq 7/3$), we immediately
get the following corollary:
\begin{corollary}
The first-order theory of the overlap-free words (or more generally,
$\alpha$-free words for rational or extended rational $\alpha$ with  $2 < \alpha \leq 7/3$), is decidable.
\end{corollary}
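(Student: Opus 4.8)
The plan is to reduce any first-order sentence $\varphi$ about overlap-free words to an equivalent sentence in one of the decidable theories $\FO(K_{\rm finite})$ or $\FO(K_{\rm infinite})$ furnished by Theorems~\ref{thm1} and~\ref{thm2}. The bridge is Theorem~\ref{restivo}: every finite (resp.\ infinite) binary $\alpha$-free word is realized as a Salemi word $S(c,d)$ (resp.\ a Restivo word $R(\mathbf{c})$) built from pieces of $S$. Consequently the overlap-free words are exactly those Salemi and Restivo words whose codes happen to produce a $2^+$-power-free word, and---crucially---the collection of such codes is cut out by a first-order condition on the code together with the lookup predicate $n \mapsto R(\mathbf{c})[n]$ already present in the structures of $K_{\rm infinite}$ (and likewise $n \mapsto S(c,d)[n]$ for $K_{\rm finite}$).

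First I would record the ``freeness filter'' for the target exponent. For overlaps this is the negation of the displayed overlap formula; for a rational $\alpha = p/q$ (or its extended-rational companion $\alpha^+$) it is the negation of the displayed $p/q$-power formula, with the strict or non-strict inequality selected according to the $e<e^+<f$ convention. Writing $\mathrm{OF}(\mathbf{c})$ for this condition applied to $R(\mathbf{c})$, I note that $\mathrm{OF}$ is itself a formula of $N'_{\mathbf{c}}$, since it involves only B\"uchi arithmetic on positions together with the lookup predicate already in the signature; the analogous remark holds for $S(c,d)$.

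Next I would define the translation of $\varphi$. Each quantifier over overlap-free words is relativized to $\mathrm{OF}$-codes: ``$\exists\,\mathbf{x}$'' becomes ``$\exists\,\mathbf{c}\ \bigl(\mathrm{OF}(\mathbf{c}) \wedge \cdots\bigr)$'' and ``$\forall\,\mathbf{x}$'' becomes ``$\forall\,\mathbf{c}\ \bigl(\mathrm{OF}(\mathbf{c}) \implies \cdots\bigr)$''. Each atomic reference to a letter $\mathbf{x}[i]$ is replaced by the value of the lookup predicate $R(\mathbf{c})[i]$, and each word comparison $\mathbf{x}=\mathbf{y}$ by position-wise equality $\forall n\ \bigl(R(\mathbf{c})[n]=R(\mathbf{c}')[n]\bigr)$; quantifiers over positions and the arithmetic on them survive unchanged, as they already belong to B\"uchi arithmetic. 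Because the proofs of Theorems~\ref{thm1} and~\ref{thm2} read the code and the position in parallel, quantification over the code $\mathbf{c}$ is itself available inside the logic, so the translated sentence $\varphi^\ast$ is a genuine sentence of $\FO(K_{\rm infinite})$---or of $\FO(K_{\rm finite})$, using Salemi codes, for any word-variable ranging over finite words---and is therefore decidable.

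The correctness of the reduction rests on the surjectivity supplied by Theorem~\ref{restivo} together with the observation that $\varphi^\ast$ depends on a code $\mathbf{c}$ only through the word $R(\mathbf{c})$ it names. Although the coding is many-to-one, relativizing to $\mathrm{OF}$ makes the code-quantifiers range over precisely the overlap-free words, each merely represented several times, which cannot change the truth of the relativized sentence; hence $\varphi$ holds over the overlap-free words iff $\varphi^\ast$ holds, and decidability follows. I expect the only real subtlety to be bookkeeping---routing finite- and infinite-word variables to the appropriate theorem and transcribing the exponent inequalities so that the $\alpha^+$ case is handled correctly---since the substantive work of building the lookup automata has already been carried out in Theorems~\ref{thm1} and~\ref{thm2}.
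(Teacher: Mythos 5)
Your proposal is correct and follows essentially the same route as the paper: the paper likewise derives the corollary from Theorems~\ref{thm1} and~\ref{thm2} by expressing overlap-freeness (or $\alpha$-power-freeness, with the strict/non-strict distinction for $\alpha^+$) as a first-order formula over the lookup predicates and relativizing quantification over $\alpha$-free words to codes satisfying that formula, relying on the surjectivity of the Restivo--Salemi coding from Theorem~\ref{restivo}. Your write-up merely makes explicit the relativization, translation of atomic formulas, and the harmlessness of the many-to-one coding, which the paper leaves implicit in its one-sentence derivation.
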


\section{Implementation}

We implemented part of the decision procedure discussed in
Section~\ref{decision} using {\tt Walnut}, a theorem-prover
originally designed by Hamoon Mousavi \cite{Mousavi:2016}.

The main part we implemented was for the finite Restivo words.
This allows us to solve many (but not all) questions about
infinite overlap-free words.   The limitation is because
{\tt Walnut} is based on ordinary finite automata and not
B\"uchi automata.

To implement our decision procedure in
{\tt Walnut}, we represent encodings as strings
over the alphabet $\{1,2,3,4,5\}$.   Since the encoded binary string
might need more binary digits to specify a position within it than
the number of symbols in the encoding, we also allow an arbitrary number
of trailing zeros in a code.

All numbers are represented in base $2$, starting with the 
{\it least significant digit}.

Our {\tt Walnut} solution needs various subautomata, as follows.
Most of these are deterministic finite automata (DFA), with the
exception of {\tt CODE} and {\tt LOOK}, which are
DFAO's.

\begin{itemize}

\item {\tt power2}:  one argument $n$. True if $n$ is a power of $2$ and
$0$ otherwise.

\item {\tt adjacent}:  two arguments $m, n$.  True if $m = 2^i$, $n = 2^{i-1}$
for some $i \geq 1$, or if $m = 1$ and $n = 0$.

\item {\tt hmorph}:  two arguments $c,y$.   True if $y$ represents applying
$h$ to the code specified by $c$.

\item {\tt validcode}:  one argument $c$.  True if $c$ represents a valid
code, that is, a word in $\{ 1,2,3,4,5 \}^*$ followed by $0$'s.

\item {\tt length}:   two arguments $c,n$.   True if $n$ is the length
of the binary string encoded by the codes $c$.

\item {\tt prefix}:   three arguments $a,b,c$.   Both $b,c$ are
are extended binary representations, while $a$ is either $0$ or
a power of $2$ in ordinary binary representation.  The result is true
if the word
$c$ equals $b$ copied digit-by-digit, up to and including the position
specified by the single $1$ in $a$, and $0$'s thereafter.   

\item {\tt CODE}:   a DFAO, two arguments $c, n$.   Returns the
code in $\{1,2,3,4,5\}$
corresponding to the digit specified by $n$, a power of $2$.

\item {\tt look1}:  two arguments $c, n$.  True if 
$R(c_0 c_1 \cdots c_{t-1})[n] = 1$ and $0$ otherwise (which
includes the case where the index $n$ is out of range).

\item {\tt look2}:  two arguments $c, n$.  True if the code
$c$ is invalid (for example, because it has interior $0$'s) or
the index $n$ is out of range.

\item {\tt LOOK}:  a DFAO, two arguments $c, n$.  Returns
$R(c_0 c_1 \cdots c_{t-1})[n]$ if the index is in range, 
and $2$ otherwise.  Obtained by combining the DFA's for
{\tt look1} and {\tt look2}.

\end{itemize}

Here is the {\tt Walnut} code for these.  A brief reminder
of {\tt Walnut}'s syntax may be necessary.   
\begin{itemize}
\item {\tt A} and {\tt E} represent the universal and existential
quantifiers, respectively.
\item {\tt lsd\_k} tells {\tt Walnut} to interpret numbers
in base-$k$, using least-significant-digit first representation.
\item {\tt |} is logical OR, {\tt \&} is logical AND,
{\tt =>} is logical implication, {\tt \char'176} is logical NOT.
\item {\tt reg} defines a regular expression.
\item {\tt def} defines an automaton accepting the representation
of free variables making the formula true.
\end{itemize}

\begin{verbatim}
reg power2 lsd_2 "0*10*":
def adjacent "?lsd_2 ($power2(m) & $power2(n) & m=2*n) | (m=1 & n=0)":
reg hmorph lsd_6 lsd_3 "([1,0]|[2,1]|[3,1]|[4,2]|[5,2])*[0,0]*":
reg validcode lsd_6 "(1|2|3|4|5)*0*":
reg prefix lsd_2 lsd_3 lsd_3 "(([0,0,0]|[0,1,0]|[0,2,0])*)|
   (([0,0,0]|[0,1,1]|[0,2,2])*)([1,0,0]|[1,1,1]|[1,2,2])
   ([0,0,0]|[0,1,0]|[0,2,0])*":
def length "?lsd_2 El $hmorph(?lsd_6 c,?lsd_3 l) & 
   $normalize(?lsd_3 l,?lsd_2 n)":
\end{verbatim}

In order to construct the automaton {\tt look1}, which is the most
complicated part of our construction, we use the following auxiliary
variables:
\begin{itemize}
\item $p$, the power of $2$ that corresponds to the particular
$\mu^i (g(c_i))$ block that the $n$'th bit falls in.
\item $q= \lfloor p/2 \rfloor$.
\item $l$, a number in extended binary representing the lengths of the
strings represented by the codes $c$.
\item $g$, a number in extended binary where we have cancelled from $l$
the bits corresponding to higher powers of $2$ than $p$.
\item $h$, a number in extended binary where we have cancelled from $l$
the bits corresponding to higher powers of $2$ than $q$.
\item $r$, a base-$2$ index giving the start of the block after which
$n$ appears.
\item $s$, a base-$2$ index giving the start of the block where $n$
appears.
\item $x$, the relative position inside the appropriate block corresponding
to the bit $n$.
\end{itemize}

Once these are ``guessed'' with an existential quantifier, we verify
them with the appropriate automata and then compute the appropriate
bit depending on the particular $c_i$, as follows:

\begin{verbatim}
def look1 "?lsd_2 Ep,q,l,g,h,r,s,x $validcode(?lsd_6 c) & $adjacent(p,q) &
   $hmorph(?lsd_6 c,?lsd_3 l) & $prefix(?lsd_2 p,?lsd_3 l,?lsd_3 g) &
   $prefix(?lsd_2 q,?lsd_3 l,?lsd_3 h) & $normalize(?lsd_3 g,?lsd_2 r) &
   $normalize(?lsd_3 h,?lsd_2 s) & n>=s & n<r & x+s=n &
   ((CODE[?lsd_2 p][?lsd_6 c]=@2 & T[x]=@1)
   |(CODE[?lsd_2 p][?lsd_6 c]=@3 & T[x]=@0)
   |(CODE[?lsd_2 p][?lsd_6 c]=@4 & x<p & T[x]=@1)
   |(CODE[?lsd_2 p][?lsd_6 c]=@4 & x>=p & T[x-p]=@1)
   |(CODE[?lsd_2 p][?lsd_6 c]=@5 & x<p & T[x]=@0)
   |(CODE[?lsd_2 p][?lsd_6 c]=@5 & x>=p & T[x-p]=@0))":
def look2 "?lsd_2 (~$validcode(?lsd_6 c)) | (El $length(?lsd_6 c,?lsd_2 l) 
   & n>=l):
combine LOOK look1=1 look2=2:
\end{verbatim}
The resulting DFAO, {\tt LOOK}, has 17 states.  We do not display it
here because its transition diagram is too complicated.

\section{Applications}

\subsection{Overlap-free words}

    We can now use this DFAO to obtain a number of
results.   First, let us find an automaton recognizing all finite
words $c_0 \cdots c_{t-1}$ such that $R(c_0 \cdots c_{t-1})$
is overlap-free.   This is done as follows:
\begin{verbatim}
def hasover "?lsd_2 At (t<=n) => LOOK[?lsd_6 c][i+t]=LOOK[?lsd_6 c][i+n+t]":
def ovlf "?lsd_2 $validcode(?lsd_6 c) & ~Ei,n,l $length(?lsd_6 c,?lsd_2 l)
   & n>=1 & i+2*n<l & $hasover(?lsd_6 c,?lsd_2 i,?lsd_2 n)":
reg good lsd_6 "(1|2|3|4|5)*":
def ovlfg "?lsd_6 $good(c) & $ovlf(c)":
\end{verbatim}
The resulting automaton is depicted in Figure~\ref{fig2}.
\begin{figure}[H]
\begin{center}
\includegraphics[width=6.5in]{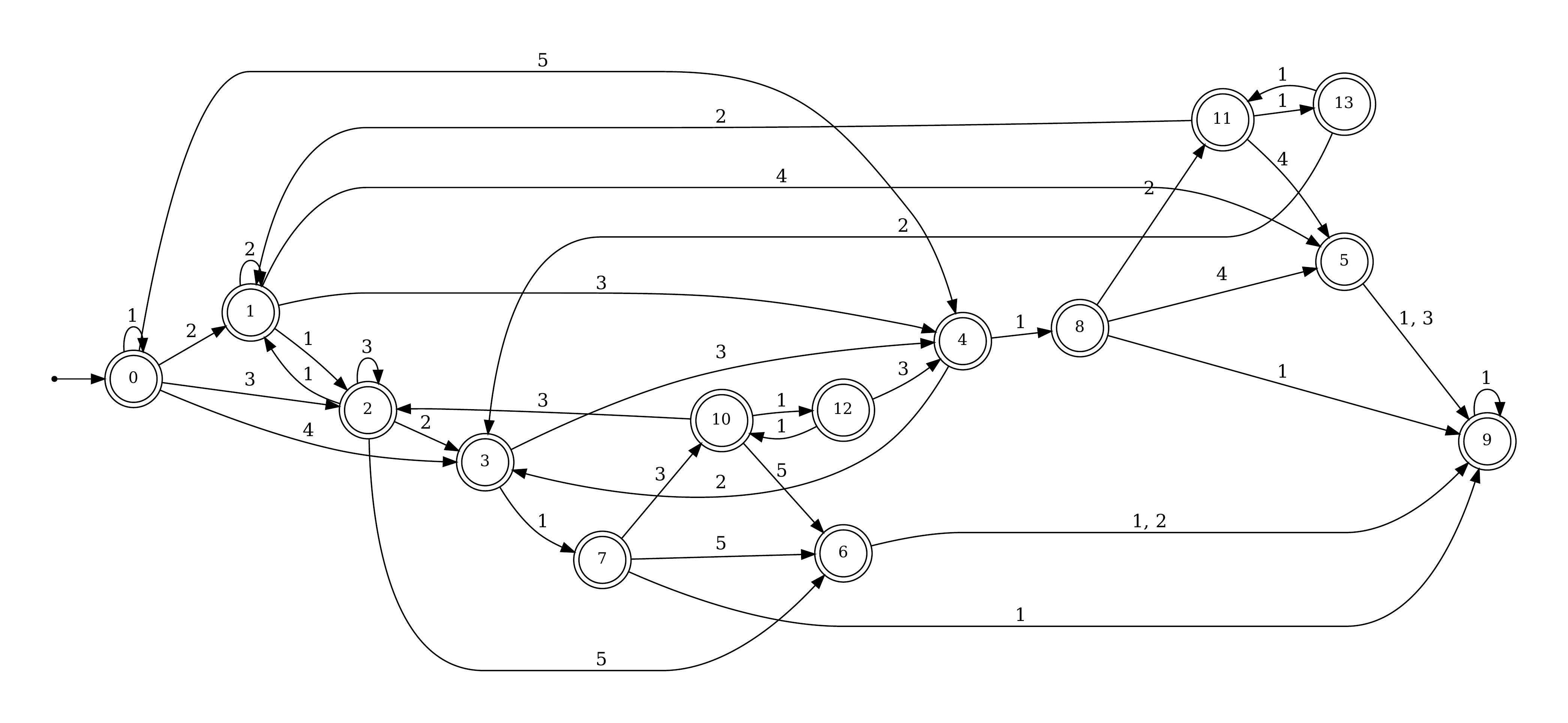}
\end{center}
\caption{Codes for overlap-free sequences.}
\label{fig2}
\end{figure}

This automaton essentially accepts all infinite strings $c_0 c_1 c_2 \cdots$
such that $R(c_0 c_1\ldots)$ is overlap-free.  However, there are
some subtleties that arise in interpreting it, due to the nature of our
encoding.    We describe them now.

When we compare the automaton in Figure~\ref{fig2} to that in
\cite{Shallit:2011a}, we see the following differences.  First,
the codes are different, and are related as follows:
\begin{table}[H]
\begin{center}
\begin{tabular}{c|c|c}
encoded word & old code & new code \\
\hline
$\epsilon$ & 0 & 1 \\
0 & 1 & 2 \\
1 & 3 & 3 \\
00 & 2 & 4 \\
11 & 4 & 5
\end{tabular}
\end{center}
\end{table}

Second, the names of states are different, and are related as follows:
\begin{table}[H]
\begin{center}
\begin{tabular}{c|c}
old state & new state \\
\hline
A & 0 \\
B & 1 \\
C & 3 \\
D & 2 \\
E & 4 \\
F & 7 \\
G & 10 \\
H & 12 \\
I & 8 \\
J & 11 \\
K & 13
\end{tabular}
\end{center}
\end{table}

Notice that the automaton in Figure~\ref{fig2} has three
additional states, numbered 5,6,9, that do not appear
in the automaton given in \cite{Shallit:2011a}.
The explanation for this is as follows:  the only accepting
paths from these states end in an infinite tail of $1$'s.
These paths can only correspond to either a suffix of $\bf t$ or
$\overline{\bf t}$, and in all cases the resulting
words have overlaps.  Therefore we can delete these states
5,6,9 from Figure~\ref{fig2} and obtain the automaton
given in \cite{Shallit:2011a}.

We now use our automaton for overlap-free words to prove a result,
about the lexicographically least overlap-free infinite word, previously
proved in \cite{Allouche&Currie&Shallit:1998}.
\begin{theorem}
The lexicographically least overlap-free infinite word is
$001001 \overline{\bf t}$.
\end{theorem}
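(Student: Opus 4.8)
The plan is to prove that $\mathbf{w} = 001001\overline{\mathbf{t}}$ is the lexicographically least overlap-free infinite word by combining two ingredients: first, that $\mathbf{w}$ is itself overlap-free, and second, that no overlap-free infinite word can be lexicographically smaller. Both facts will be verified mechanically using the decision procedure and the automaton of Figure~\ref{fig2}, rather than by the case analysis of \cite{Allouche&Currie&Shallit:1998}. The key observation is that the Thue-Morse structure theorem lets us encode every candidate word as a Restivo/Salemi code, and the \texttt{LOOK} DFAO lets us reason about individual bits via first-order formulas in B\"uchi arithmetic.

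First I would express the target word $\mathbf{w} = 001001\overline{\mathbf{t}}$ in our encoding. Since $\overline{\mathbf{t}}$ is a tail of the Thue-Morse structure, and the prefix $001001$ is short, I expect $\mathbf{w}$ to correspond to a specific finite code $\mathbf{c}^\ast$ over $\{1,2,3,4,5\}$ (possibly padded with the appropriate infinite tail handled by part (b) of Theorem~\ref{restivo}). I would confirm via \texttt{LOOK} (or directly via the definition of $R$) that $R(\mathbf{c}^\ast)$ produces the bits $0,0,1,0,0,1,\ldots$ agreeing with $001001\overline{\mathbf{t}}$, and then run the overlap-freeness test \texttt{ovlf} on $\mathbf{c}^\ast$ to certify that $\mathbf{w}$ is overlap-free. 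This step is essentially a finite verification.

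The substantive step is minimality. I would write a first-order formula asserting the existence of an overlap-free infinite word $\mathbf{x}$ that is lexicographically strictly smaller than $\mathbf{w}$, namely that there is some position $j$ with $\mathbf{x}[i] = \mathbf{w}[i]$ for all $i < j$ and $\mathbf{x}[j] < \mathbf{w}[j]$ (i.e. $\mathbf{x}[j] = 0$ and $\mathbf{w}[j] = 1$), where $\mathbf{x}$ ranges over words coded by valid codes accepted by the automaton in Figure~\ref{fig2}. Quantifying over codes $\mathbf{c}$ with $R(\mathbf{c})$ overlap-free, and comparing $R(\mathbf{c})$ against $R(\mathbf{c}^\ast)$ bit-by-bit using \texttt{LOOK}, the decision procedure of Theorems~\ref{thm1} and~\ref{thm2} determines whether such an $\mathbf{x}$ exists. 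If the procedure reports that no smaller overlap-free word exists, minimality is established. Because \texttt{Walnut} as implemented handles only the finite-code case, I would either phrase the lexicographic comparison so that any strictly smaller word must already differ from $\mathbf{w}$ within a bounded prefix (reducing to a finite check), or invoke the full B\"uchi-automaton version guaranteed decidable by Theorem~\ref{thm2}.

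The main obstacle I anticipate is the interplay between the lexicographic order and our encoding's nonuniqueness: a single overlap-free word may admit several codes, and the automaton of Figure~\ref{fig2} has the spurious states $5,6,9$ whose only accepting continuations are tails of $1$'s corresponding to suffixes of $\mathbf{t}$ or $\overline{\mathbf{t}}$ that in fact produce overlaps. I would therefore need to be careful that the lexicographic-minimality formula quantifies over exactly the genuinely overlap-free words, using the cleaned-up automaton (with states $5,6,9$ removed) rather than the raw one, so that no spurious ``smaller'' candidate is erroneously admitted or excluded. Once the correct set of codes is pinned down, the comparison and the emptiness check are routine for the decision procedure, and the result follows.
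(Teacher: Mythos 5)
Your toolkit is the right one (codes, the \texttt{LOOK} DFAO, first-order formulas over B\"uchi arithmetic), but your minimality step has a genuine gap, and it is precisely the point where the paper's proof does something different. The paper does not fix the candidate word and then ask the decision procedure whether a lexicographically smaller infinite overlap-free word exists. Instead it computes, via the predicate \texttt{lexleast}, the set of all \emph{finite} codes $c_1$ such that $R(c_1)$ is overlap-free and lexicographically $\leq$ every overlap-free word at least as long --- that is, the lexicographically least overlap-free word of each length --- and then inspects the resulting automaton (Figure~\ref{fig11}): the only arbitrarily long accepting paths not ending in the padding symbol $1$ are $4131^*3$, which code the words $001001\,\mu^k(1)$, i.e., the prefixes of $001001\overline{\bf t}$. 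Per-length minimality of these prefixes then forces every infinite overlap-free word to be $\geq 001001\overline{\bf t}$. Your proposed finite-code fallback --- ``phrase the comparison so that any strictly smaller word must already differ from $\mathbf{w}$ within a bounded prefix'' --- assumes exactly what is not available: there is no a priori bound on the first position of difference, and exhibiting one is essentially equivalent to the theorem. The per-length \texttt{lexleast} computation is the correct finite-code reduction, and it is not a routine reformulation of your query. Relatedly, your first step cannot be carried out as stated: there is no single finite code $\mathbf{c}^\ast$ with $R(\mathbf{c}^\ast) = 001001\overline{\bf t}$; the paper's family $4131^*3$ of finite codes for longer and longer prefixes is the substitute for such a $\mathbf{c}^\ast$.

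Your alternative route through Theorem~\ref{thm2} has a second, subtler flaw. The structures in $K_{\rm infinite}$ carry only words $R(\mathbf{c})$ for infinite codes $\mathbf{c}$, but by part (b) of Theorem~\ref{restivo} an infinite overlap-free word may instead have the form $x_0\mu(x_1)\cdots\mu^n(x_n){\bf t}$ or $x_0\mu(x_1)\cdots\mu^n(x_n)\overline{\bf t}$, and such words --- including your target $001001\overline{\bf t}$ itself --- are \emph{not} of the form $R(\mathbf{c})$ for any infinite code: writing $\overline{\bf t} = \mu^{k}(1)\,\mu^{k}(001011\cdots)$, the residual tail begins with $\mu^k(00)$, which no block $\mu^{k+1}(x_{k+1})$ with $x_{k+1} \in S$ can match except the empty word, so the product degenerates. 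Hence a negative answer to ``does some infinite code give a smaller overlap-free word'' would not by itself establish minimality over all infinite overlap-free words; the ${\bf t}/\overline{\bf t}$-tailed case needs separate treatment (compare how the paper's \texttt{tmagree} argument handles such tails by prefix-extension of finite codes). Finally, your worry about the spurious states $5,6,9$ is misplaced at the level of the formulas: the predicate \texttt{ovlf} is applied to finite codes and exactly characterizes overlap-freeness of the finite coded word, so no ``cleaned-up'' automaton is needed inside a minimality formula; those states matter only when one reads the automaton of Figure~\ref{fig2} as an $\omega$-automaton.
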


\begin{proof}
We create a {\tt Walnut} formula that recognizes all finite code strings $c$
with the property that the overlap-free word $w$ specified by $c$ is
lexicographically $\leq$ all overlap-free words $w'$ with $|w'| \geq |w|$.
This can be done as follows:
\begin{verbatim}
reg good lsd_6 "(1|2|3|4|5)*":
def agrees "?lsd_2 At (t<b) => LOOK[?lsd_6 c1][t]=LOOK[?lsd_6 c2][t]":
# inputs (b,c1,c2)
# does the word specified by c1 agree with that specified by c2 
# on positions 0 through b-1?

def ispref "?lsd_2 El1,l2 $length(?lsd_6 c1,?lsd_2 l1) & 
   $length(?lsd_6 c2,?lsd_2 l2) & l1<=l2 & 
   $agrees(l1,?lsd_6 c1, ?lsd_6 c2)":
# code c1, c2
# yes if word coded by c1 is a prefix of that coded by c2

def lexlt "?lsd_2 El1,l2,m,i $length(?lsd_6 c1,?lsd_2 l1) &
   $length(?lsd_6 c2,?lsd_2 l2) & $min(l1,l2,m) & i<m & 
   $agrees(i,?lsd_6 c1, ?lsd_6 c2) & 
   LOOK[?lsd_6 c1][?lsd_2 i]<LOOK[?lsd_6 c2][?lsd_2 i]":

def lexlte "?lsd_6 $ispref(c1,c2) | $lexlt(c1,c2)":

def lexleast "?lsd_2 $good(c1) & $validcode(?lsd_6 c1) & $ovlf(?lsd_6 c1)
   & Ac2,l1,l2 ($validcode(?lsd_6 c2) & $ovlf(?lsd_6 c2) 
   & $length(?lsd_6 c2,?lsd_2 l2) & $length(?lsd_6 c1,?lsd_2 l1) & l1<=l2)
   =>  $lexlte(c1,c2)":
\end{verbatim}

The resulting automaton is depicted in Figure~\ref{fig11}.  This
was a rather big computation in {\tt Walnut}; the automaton for
{\tt agrees} has 122 states, and required 120G of RAM and
87762417 ms to compute.  The largest intermediate automaton 
had 3534633 states.
\begin{figure}[H]
\begin{center}
\includegraphics[width=5.5in]{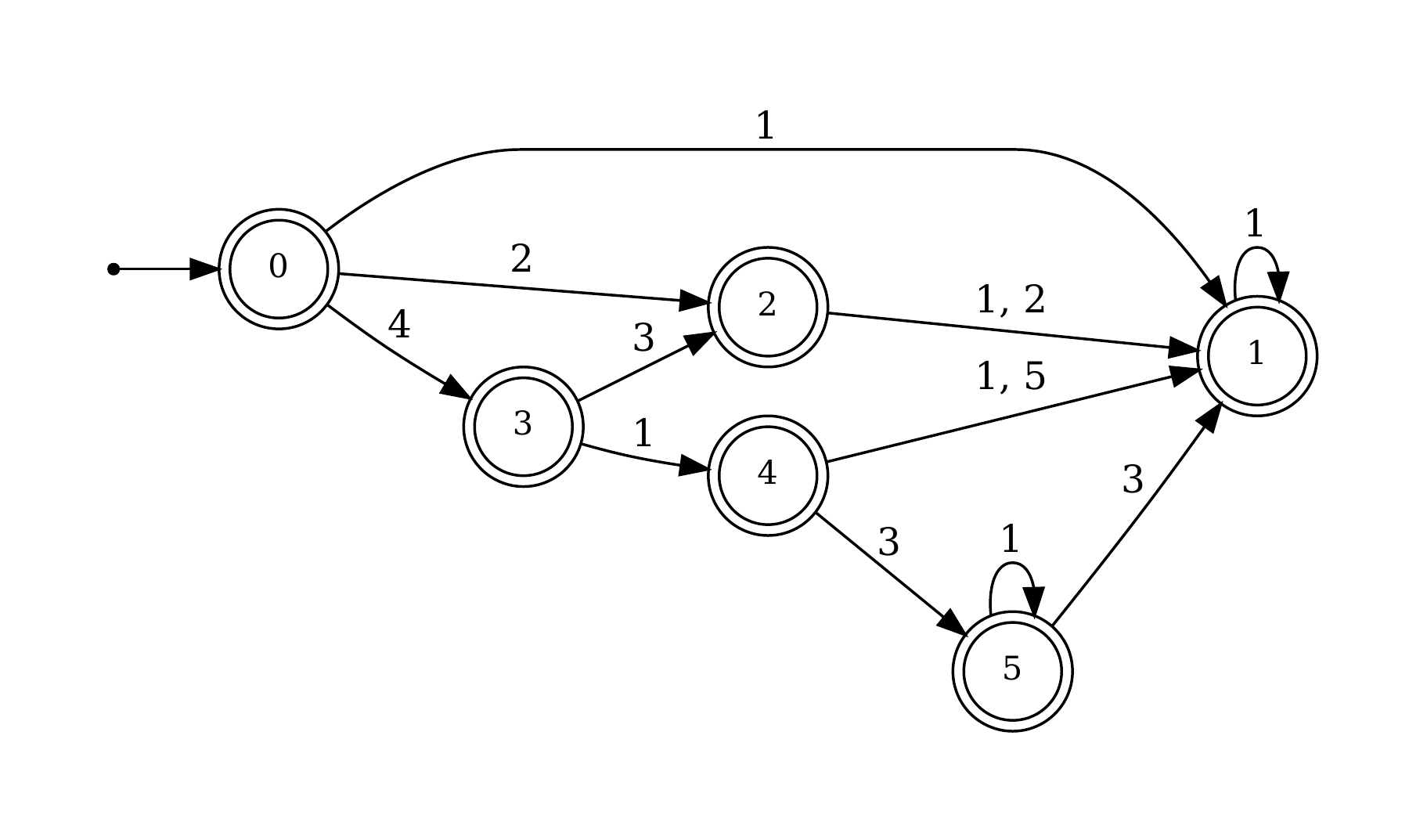}
\end{center}
\caption{Codes for lexicographically smallest words.}
\label{fig11}
\end{figure}
By inspection of this automaton, we see that the only arbitrarily long
accepting path that does not end in $1$'s is $4131^*3$.
This corresponds to the word $001001\overline{\bf t}$.
\end{proof}

\begin{remark}
Using our technique, we can also prove that the same word
$001001\overline{\bf t}$ is the lexicographically least
$7/3$-power-free word, and
and hence it is lexicographically least for
all $\alpha$-power-free words with $2< \alpha \leq 7/3$.
\end{remark}

Now we turn to the following theorem
from \cite{Brown&Rampersad&Shallit&Vasiga:2006}:  
\begin{theorem}
Take the Thue-Morse word $\bf t$ and flip any finite nonzero number
of bits, sending $0$ to $1$ and vice versa.  
Then the resulting word has an overlap.
\end{theorem}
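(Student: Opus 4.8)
The plan is to phrase the statement as a single decidable assertion and discharge it with the machinery of Section~\ref{decision}. I would first rewrite it in the contrapositive form: \emph{the only infinite overlap-free word that agrees with $\mathbf{t}$ at all but finitely many positions is $\mathbf{t}$ itself}. Flipping a finite nonempty set $F$ of positions produces the word
\[
\mathbf{w}_F[n] =
\begin{cases}
\mathbf{t}[n], & n \notin F;\\
1 - \mathbf{t}[n], & n \in F,
\end{cases}
\]
and since $\mathbf{t}$ is $2$-automatic, the bit $\mathbf{t}[n]$ is supplied by the two-state Thue-Morse DFAO already living inside B\"uchi arithmetic. The theorem then becomes the sentence
\[
\forall F\ (F \neq \emptyset) \implies \exists i,p\ (p \geq 1)\ \wedge\ \forall t\ (t \leq p) \implies \mathbf{w}_F[i+t] = \mathbf{w}_F[i+t+p],
\]
which simply reuses the overlap formula exhibited earlier, specialized to $\mathbf{w}_F$; if the decision procedure returns ``true,'' we are done.

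The delicate point — and what I expect to be the main obstacle — is the quantifier $\forall F$ over an \emph{arbitrary, unbounded} finite flip-pattern. It is tempting to store $F$ as a single integer $f = \sum_{n \in F} 2^n$ and replace ``$n \in F$'' by ``bit $n$ of $f$ equals $1$,'' keeping everything first order. This does \emph{not} work inside B\"uchi arithmetic: extracting the digit of $f$ at value-position $n$ requires relating $n$ to $2^n$, and the graph $\{(n,2^n)\}$ is not $2$-recognizable (it essentially encodes exponentiation), so membership is not expressible by a finite automaton reading $n$ and $f$ in parallel. The honest formulation is therefore genuinely \emph{monadic second-order}: $F$ is a finite set, ``$n \in F$'' is a native atom, and decidability then rests on B\"uchi's theorem \cite{Buchi:1960} for the weak monadic second-order theory rather than on the first-order lookup automata alone.

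If one insists on staying within the finite-automaton (Walnut) setting, which lacks set quantification, I would instead eliminate the set quantifier using the structure theorem. By Theorem~\ref{restivo}(b), any infinite overlap-free word that agrees with $\mathbf{t}$ cofinitely must have one of the shapes listed there: a fully infinite product, or $x_0 \mu(x_1) \cdots \mu^n(x_n)\, \mathbf{z}$ with $\mathbf{z} \in \{\mathbf{t}, \overline{\mathbf{t}}\}$. Cofinite agreement with $\mathbf{t}$ forces strong constraints on the tail cases: a $\mathbf{t}$-tail behind a nonempty prefix would make $\mathbf{t}$ eventually periodic, and an $\overline{\mathbf{t}}$-tail would make $\mathbf{t}$ cofinitely equal to a shift of its own complement, both false for the aperiodic Thue-Morse word. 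What then remains is a bounded amount of prefix data together with the two fixed tails $\mathbf{t}, \overline{\mathbf{t}}$, all of which the finite Restivo lookup automata of Section~\ref{decision} can check directly, reducing the second-order statement to a first-order one.

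I expect the real difficulty to lie in the fully infinite case: showing that no product $x_0 \mu(x_1) \mu^2(x_2) \cdots$ with infinitely many nontrivial factors can agree with $\mathbf{t}$ cofinitely while differing from it somewhere. Establishing that this family contributes nothing new — so that the bounded check above is genuinely exhaustive — is, I believe, the crux; everything else becomes mechanical once the decision procedures of Theorems~\ref{thm1} and~\ref{thm2} are invoked.
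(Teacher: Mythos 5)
Your reformulation (quantify over codes of overlap-free words instead of flip sets) and your disposal of the two tail cases of Theorem~\ref{restivo}(b) by aperiodicity are both sound and match the paper's opening move. But the proof stops exactly where you say the crux is: you never show that an infinite product $x_0\mu(x_1)\mu^2(x_2)\cdots$ with infinitely many nontrivial blocks cannot agree with $\mathbf{t}$ cofinitely while differing somewhere, and your claim that after the tail cases ``what remains is a bounded amount of prefix data \ldots which the finite Restivo lookup automata can check directly'' is unsubstantiated --- indeed, the infinite-code case is precisely what the finite-code implementation \emph{cannot} quantify over directly. The paper's missing idea is a compactness-style surrogate: define (the predicate \texttt{changebits}) the set of finite codes $c$, with $l = |R(c)|$, such that \emph{for every} $z$ there exists a code $d$ extending $c$ with $|R(d)| \geq z$, $R(d)$ overlap-free, and $R(d)$ agreeing with $\mathbf{t}$ on positions $l$ through $|R(d)|-1$. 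This is first order over finite codes, so Walnut can compute it. Any infinite overlap-free word has, by Theorem~\ref{restivo}(b), a chain of finite Restivo prefixes of unbounded length consistent with a single code prefix (the $\mathbf{t}$ and $\overline{\mathbf{t}}$ tails are captured by codes of the form $c_0\cdots c_n 1^{j}2$ and $c_0\cdots c_n 1^{j}3$), so a counterexample word with last flipped position $m$ would force some accepted code with $l \geq m+1 > 0$. Walnut returns that the accepted language is exactly $1^*$, i.e., only $l = 0$, which kills the infinite-product case and the tail cases in one uniform computation --- no case analysis on the shape of the decomposition is needed. So the gap in your proposal is a genuine one, and the paper's resolution of it is the $\forall z\,\exists d$ extension formula together with the machine computation, not a by-hand structural argument.

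A secondary caution: your fallback to ``genuinely monadic second-order'' logic is shakier than you suggest. The overlap formula requires addition of position variables ($\mathbf{w}[i+t] = \mathbf{w}[i+t+p]$); weak MSO over $(\mathbb{N},+)$ is undecidable, while MSO over $(\mathbb{N},<)$ (where B\"uchi's theorem applies) cannot express equality of letters at positions a common unknown distance apart, since languages like the overlap-containing words are not regular. So there is no known decidable second-order theory into which the na\"{\i}ve $\forall F$ formulation drops; the reduction to first-order statements about finite codes is not merely an implementation convenience but the essential step.
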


At first glance this theorem does not seem susceptible to our technique,
because specifying an arbitrary finite set of positions to change
requires second-order logic.   But we can still prove it!  Instead
of quantifying over all finite sets of positions to change, we instead
quantify over all infinite overlap-free words, and ask for which
codes $c_0 c_1 c_2 \cdots$ the specified word agrees with Thue-Morse
on an infinite suffix.

If we had implemented our decision procedure for infinite Restivo words
using B\"uchi automata instead of ordinary finite automata, this would
be easy to translate into a first-order logical formula.  However, the
fact that our implementation can only deal with finite codes
$c_0 c_1 \cdots c_t$ makes it somewhat harder.

\begin{proof}
Instead, we use the following idea:  we design an automaton to accept
all finite codes $c_0 \cdots c_t$ with the property that there exists
arbitrarily long finite codes $d_0 \cdots d_s$ such that
\begin{itemize}
\item $c_0 \cdots c_t$ is a prefix of $d_0 \cdots d_s$;
\item $w = R(d_0 \cdots d_s)$ is overlap-free;
\item $|R(c_0 \cdots c_t)| = l$;
\item $w$ agrees with ${\bf t}$ on the positions from index $l$ to
index $|w|-1$.
\end{itemize}

This is done with the following {\tt Walnut} code:
\begin{verbatim}
reg prefixc lsd_6 lsd_6 "([1,1]|[2,2]|[3,3]|[4,4]|[5,5])*
([0,1]|[0,2]|[0,3]|[0,4]|[0,5])*[0,0]*":
reg lastnzcode lsd_6 lsd_2 "([1,0]|[2,0]|[3,0]|[4,0]|[5,0])*
   ([1,1]|[2,1]|[3,1]|[4,1]|[5,1])[0,0]*":
def tmagree "?lsd_2 El $length(?lsd_6 c,?lsd_2 l) & 
   At (t>=n & t<l) => LOOK[?lsd_6 c][t] = T[t]":
def changebits "?lsd_2 $good(?lsd_6 c) & El $length(?lsd_6 c,?lsd_6 l) &
   Az Ed,y $prefixc(?lsd_6 c,?lsd_6 d) & $length(?lsd_6 d,?lsd_2 y)
   & y>=z & $tmagree(?lsd_6 d,?lsd_2 l) & $ovlf(?lsd_6 d)":
\end{verbatim}
The resulting automaton only accepts $1^*$, so there are no
such codes except that specifying the Thue-Morse sequence.
\end{proof}

\subsection{${7\over 3}$-power-free words}

We now apply the method to re-derive the automaton given in
\cite{Rampersad&Shallit&Shur:2011} for ${7\over 3}$-power-free words.

\begin{verbatim}
def avoid73 "?lsd_2 $validcode(?lsd_6 c) & ~Ei,n,l 
   $length(?lsd_6 c,?lsd_6 l) & n>=1 & i+(7*n)/3<l & 
   At (3*t<4*n) => LOOK[?lsd_6 c][i+t]=LOOK[?lsd_6 c][i+n+t]":
def avoid73g "?lsd_6 $good(c) & $avoid73(c)":
\end{verbatim}

\begin{figure}[H]
\begin{center}
\includegraphics[width=6.5in]{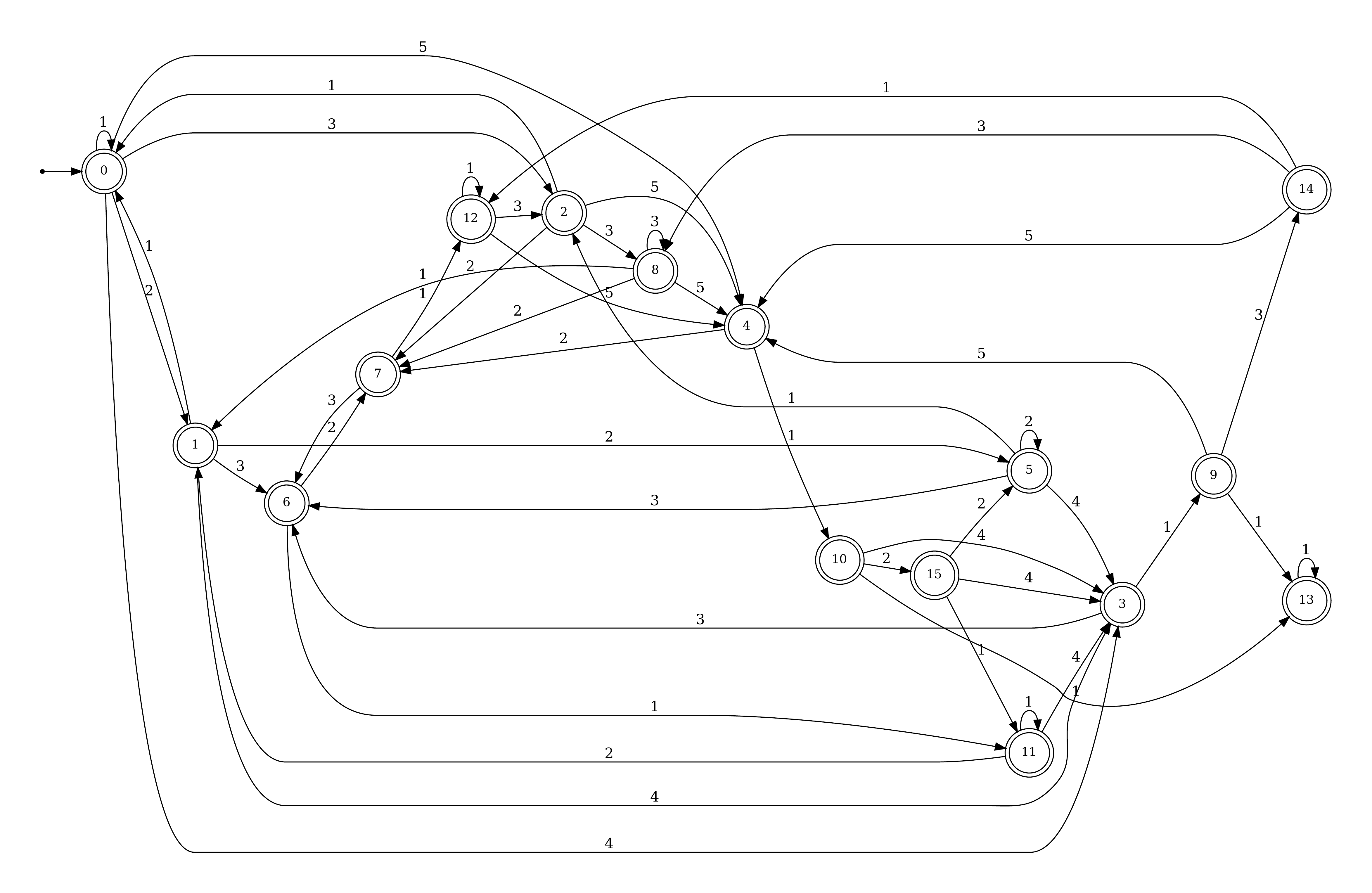}
\end{center}
\caption{Codes for ${7\over 3}$-power-free sequences.}
\label{fig3}
\end{figure}

This obtains, in a purely mechanical fashion, the automaton in Figure
2 of \cite{Rampersad&Shallit&Shur:2011} that was previously
constructed using a rather tedious examination of cases.  The
relationship between the old version in that paper and the new version
given here is summarized in Table~\ref{tab401}:
\begin{table}[H]
\begin{center}
\begin{tabular}{c|c}
old state & new state \\
\hline
$\epsilon$ & 0 \\
1 & 1 \\
2 & 3 \\
3 & 2 \\
4 & 4 \\
11 & 5 \\
13 & 6 \\
31 & 7 \\
33 & 8 \\
20 & 9 \\
40 & 10 \\
130 & 11 \\
310 & 12 \\
203 & 14 \\
401 & 15 
\end{tabular}
\end{center}
\label{tab401}
\end{table}
Once again there is a state, state 13, that appears in Figure~\ref{fig3}
but not in the paper \cite{Rampersad&Shallit&Shur:2011}.  Again, this
is because the only accepting path reachable from this state consists
of an infinite tail of $1$'s, which does not result in a ${7 \over 3}$-power-free word.

As an application, let us reprove a result from
\cite{Currie&Rampersad&Shallit:2006}:
\begin{theorem}
There exist uncountably many
infinite ${7\over3}$-power-free binary words, each containing arbitrarily
large overlaps.
\end{theorem}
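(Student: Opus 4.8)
The plan is to exhibit an uncountable family of $7/3$-power-free words, and then to show that, within this family, uncountably many contain arbitrarily large overlaps. The key is that the automaton of Figure~\ref{fig3} (equivalently {\tt avoid73g}) accepts an uncountable set of infinite codes, because it contains a strongly connected component with more than one outgoing branching choice. Concretely, I would first locate in the automaton a cycle structure that admits at least two distinct infinite continuations from some recurrent state---for instance, reading the transition diagram to find a recurrent state from which two different symbols of $\{1,2,3,4,5\}$ both lead back into the accepting part. Any such branching point yields a binary tree of valid infinite codes, hence uncountably many infinite $7/3$-power-free words by part~(a) of Theorem~\ref{restivo} together with the soundness of {\tt avoid73g}.

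\medskip

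Second, I would argue that distinct infinite codes (up to the trailing-zero and suffix subtleties noted after Figure~\ref{fig2}) give rise to distinct words, so that the cardinality of the code set transfers to the cardinality of the word set. The care needed here is exactly the identification issue flagged in the discussion of Figure~\ref{fig2}: codes ending in an infinite tail of $1$'s collapse to Thue--Morse suffixes, so I would restrict attention to the branching paths that avoid such tails, which is where the genuinely uncountable behavior lives.

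\medskip

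Third---and this is the substantive part---I must produce the arbitrarily large overlaps. Rather than doing this by hand, the natural route is to phrase ``the word coded by $c$ contains an overlap of order at least $m$'' as a first-order formula and let the decision procedure confirm that the property holds along the relevant paths. Using the {\tt LOOK} DFAO and the overlap-detection template already given in Section~\ref{decision}, I would write a formula
$$\exists i,n\ (n \geq m) \ \wedge\ \forall t\ (t \leq n) \implies \mathtt{LOOK}[c][i+t] = \mathtt{LOOK}[c][i+n+t],$$
asserting the presence of an overlap whose order $n$ is at least the parameter $m$. Combining this with {\tt avoid73g} and quantifying appropriately over the branching codes, one checks mechanically that along the chosen family of paths every bound $m$ is met. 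The key structural reason this succeeds is that the Thue--Morse morphism $\mu$ is length-doubling: applying $\mu$ to an existing small overlap (or to a near-repetition forced by the code) produces an overlap of twice the order, so iterating the branching down the code forces overlaps of unbounded order while $7/3$-power-freeness is preserved by the automaton's acceptance.

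\medskip

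The main obstacle I anticipate is the third step: simultaneously guaranteeing $7/3$-power-freeness \emph{and} arbitrarily large overlaps, since these pull in opposite directions---large repetitions are exactly what power-freeness restricts. The resolution is the gap in the spectrum: an overlap (a $2^+$-power of large order) can have exponent arbitrarily close to $2$ while still sitting strictly below $7/3$, so a word may freely contain huge overlaps without ever reaching a $7/3$-power. I would therefore concentrate on verifying that the branching paths of {\tt avoid73g} realize overlaps of exponent bounded away from $7/3$, which is precisely what the automaton's continued acceptance certifies, letting the length-doubling of $\mu$ supply the unbounded order.
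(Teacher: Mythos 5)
You have a genuine gap: your two halves never meet. Branching in the {\tt avoid73g} automaton does yield uncountably many infinite $7/3$-power-free words, but nothing in your argument forces any particular member of that generic family to contain even one overlap, let alone arbitrarily large ones --- every overlap-free word is $7/3$-power-free, so uncountably many of the accepted branching paths code words with \emph{no} overlaps at all. The theorem requires that \emph{each} word in the uncountable family satisfy ``for all $m$ there is an overlap of order $\geq m$,'' so you must exhibit a specific family in which the branching choices produce uncountability while the code structure simultaneously forces overlaps of unbounded order in every member; your phrase ``quantifying appropriately over the branching codes'' and the appeal to ``the relevant paths'' leave exactly this family undefined. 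The $\mu$-doubling heuristic does not repair this: it turns an existing overlap into a longer one inside $\mu(w)$, but the word coded by $c_0 c_1 c_2 \cdots$ is not the $\mu$-image of the word coded by $c_1 c_2 \cdots$ unless $c_0 = 1$, and in any case you never produce a seed overlap that every member of your family must contain. There is also the practical point, which you do not address, that the implementation handles only finite codes, so one must check overlaps on finite prefixes and argue they persist in the limit.

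The paper supplies precisely the missing content by an explicit construction: it claims that every code in $212\,\{12,1112\}^{i}$ codes a $7/3$-power-free word containing overlaps of $i$ different lengths. Acceptance of all of $212\,\{12,1112\}^{*}$ by the automaton of Figure~\ref{fig3} gives power-freeness, and a dedicated {\tt Walnut} predicate ({\tt large\_overl}) recognizes, in parallel, the codes in $(211^{*})^{*}2$ together with the lengths $n$ of overlaps occurring in the coded words; inspection of the resulting automaton (Figure~\ref{fig5}) shows that each appended block from $\{12,1112\}$ contributes a new overlap length, so every word coded by $212\,\{12,1112\}^{\omega}$ has overlaps of infinitely many lengths, while the binary choice at each block gives uncountability (and, incidentally, such codes contain infinitely many $2$'s, so the tail-of-$1$'s identification you rightly worried about never arises). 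Your proposal has the right tools ({\tt LOOK}, {\tt avoid73g}, a parametrized overlap formula) but omits the concrete code family and the code-to-overlap-length automaton that constitute the actual proof.
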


\begin{proof}
We claim that every code in
$212 \{12,1112\}^i $
corresponds to a ${7\over 3}$-power-free word with overlaps of $i$
different lengths.
The automaton in Figure~\ref{fig3} clearly accepts every word in
$212\{12,1112\}^*$, so the words are
${7\over3}$-power-free.  To check the property of containing arbitrarily
large overlaps, we create an automaton that recognizes, in parallel,
those codes in $(211^*)^*2$, together with the lengths of
overlaps that occur in the resulting word.
\begin{verbatim}
reg two1 lsd_6 "(21(1*))*20*":
def large_overl "?lsd_2 El,i $length(?lsd_6 c,?lsd_2 l) & $two1(?lsd_6 c)
   & n>=1 & i+2*n<l & $hasover(?lsd_6 c,?lsd_2 i,?lsd_2 n)":
\end{verbatim}

\begin{figure}[H]
\begin{center}
\includegraphics[width=6.5in]{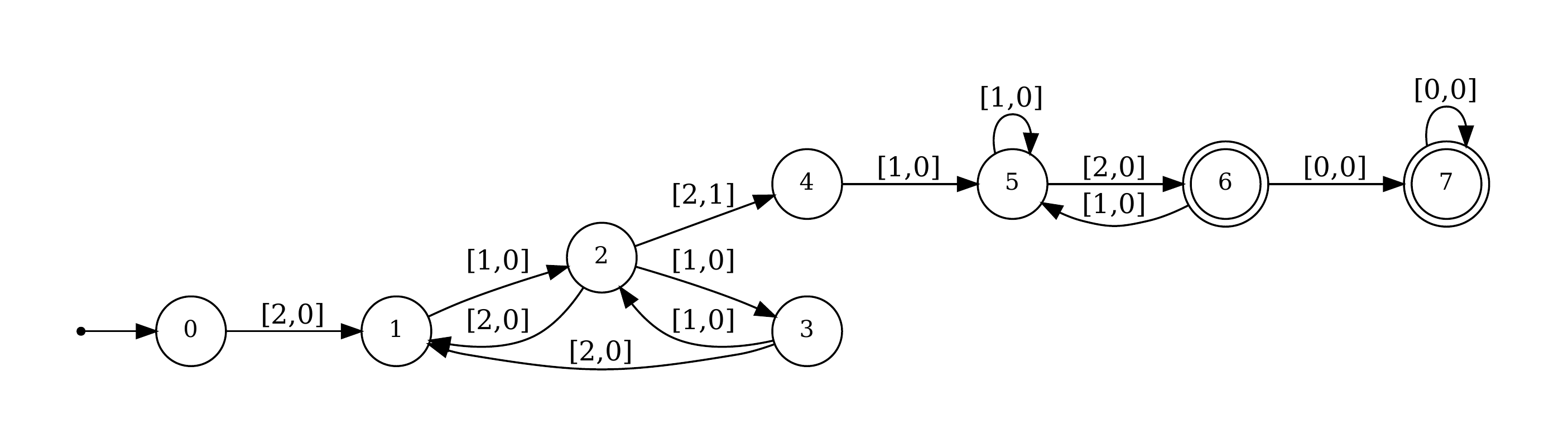}
\end{center}
\caption{Large overlaps in a ${7\over3}$-power-free word.}
\label{fig5}
\end{figure}
Inspection of the automaton in Figure~\ref{fig5} proves the claim.
Hence every word coded by $212 \{12,1112\}^\omega$ has overlaps of infinitely
many different lengths.
\end{proof}

\subsection{New results}

We can use the framework so far to prove a number of new results
about overlap-free and Restivo words.

For example, it is an easy consequence of the Restivo-Salemi theorem
that every infinite overlap-free binary word contains arbitrarily large
squares.   We can prove this and more in a quantitative sense.
\begin{theorem}
Every finite overlap-free word of length $l>7$ contains a square
of order $\geq l/6$.  Furthermore, the bound is best possible,
in the sense that there are arbitrarily large
overlap-free words for which the largest square is of order exactly $l/6$.
\end{theorem}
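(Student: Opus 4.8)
The plan is to express both halves of the statement as first-order sentences over the structure of Theorem~\ref{thm1} and to verify them with the {\tt LOOK} automaton, exactly as in the preceding applications. The basic building block is a predicate recording that the word coded by $c$ has a square of order $n$ beginning at position $i$. Mirroring the {\tt hasover} automaton, I would define
\[
\texttt{hassq}(c,i,n) \ :\equiv\ \forall t\ (t<n)\implies \texttt{LOOK}[c][i+t]=\texttt{LOOK}[c][i+n+t],
\]
and always conjoin the in-range condition $i+2n\le l$, where $l$ is the length of the word coded by $c$ (computed by {\tt length}). This guard ensures that every position touched lies genuinely inside the word, so the spurious value $2$ returned by {\tt LOOK} out of range can never cause a false match.

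For the lower bound I would write the universal sentence asserting that every valid overlap-free code $c$ whose word has length $l>7$ admits some square of order at least $l/6$. Since orders are integers, ``order $\ge l/6$'' is rendered as $6n\ge l$, so the sentence is
\[
\forall c,l\ \big(\texttt{validcode}(c)\wedge\texttt{ovlf}(c)\wedge\texttt{length}(c,l)\wedge l>7\big)\implies \big(\exists i,n\ (6n\ge l)\wedge (i+2n\le l)\wedge \texttt{hassq}(c,i,n)\big).
\]
Feeding this to {\tt Walnut} should return {\tt TRUE}, which establishes the first claim.

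For the best-possible claim I would exploit the lower bound just proved: once we know every long overlap-free word contains a square of order $\ge l/6$, ``the largest square has order exactly $l/6$'' is equivalent, on lengths divisible by $6$, to the single extra condition that no square of order strictly greater than $l/6$ fits. I would therefore build an automaton accepting the \emph{extremal} codes, namely those $c$ with \texttt{validcode}, \texttt{ovlf}, length $l$ a multiple of $6$, and $\neg(\exists i,n\ (6n>l)\wedge(i+2n\le l)\wedge\texttt{hassq}(c,i,n))$. It then remains to show this language contains arbitrarily long codes. Following the pattern used for the large-overlap theorem, I would read off an explicit regular family of extremal codes from the transition diagram (an accepting cycle yields words of unbounded length), or equivalently verify in {\tt Walnut} the sentence $\forall N\ \exists c,l\ (\texttt{extremal}(c)\wedge\texttt{length}(c,l)\wedge l>N)$.

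The main obstacle is the tightness half, specifically pinning down and certifying the extremal family: the lower-bound sentence is a routine universal check, but ``best possible'' requires exhibiting unboundedly long words that meet the bound with equality, and identifying the right regular subfamily (and confirming that its words are overlap-free, of length divisible by $6$, and free of squares of order $>l/6$) is where the work lies. A secondary point to watch is the faithful encoding of the integer constraints $6n\ge l$, $6n>l$, and $l=6m$, together with the in-range guards, so that the floor behaviour of $l/6$ is handled correctly; because the first part already supplies the matching lower bound, I can avoid separately asserting the existence of a square of order exactly $l/6$ and reduce the tightness check to the cleaner ``no larger square'' condition above.
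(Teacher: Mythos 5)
Your proposal is correct and follows essentially the same route as the paper: the paper verifies the lower bound with exactly your universal {\tt Walnut} sentence ({\tt squ}), and for tightness builds an automaton for the extremal codes ({\tt squ3}, {\tt squ3b}) and reads off an explicit unbounded family, namely $4(32)^i13$ of length $6\cdot 4^i$ with largest square of order $4^i$. The only (harmless) difference is that the paper asserts outright the existence of a square of order exactly $l/6$ via $6n=l$, whereas you derive it from the first half and check only the ``no larger square'' condition --- a sound simplification, provided you restrict attention to lengths $l>7$ so that the first half applies.
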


\begin{proof}
We can check the first claim with {\tt Walnut} as follows:
\begin{verbatim}
def has_square "?lsd_2 At (t<n) => LOOK[?lsd_6 c][i+t]=LOOK[?lsd_6 c][i+t+n]":
eval squ "?lsd_2 Ac,l ($ovlf(?lsd_6 c) & $length(?lsd_6 c,?lsd_2 l) & l>7)
   => Ei,n i+2*n<=l & 6*n>=l & $has_square(?lsd_6 c,?lsd_2 i,?lsd_2 n)":
\end{verbatim}
For the second claim, we can actually determine all code sequences
for which the largest square is of order exactly $l/6$.
\begin{verbatim}
def squ3 "?lsd_2 Ei,n,l $ovlf(?lsd_6 c) & $length(?lsd_6 c,?lsd_2 l)
   & i+2*n<=l & 6*n=l & $has_square(?lsd_6 c, ?lsd_2 i, ?lsd_2 n)":
def squ3b "?lsd_2 Ai,n,l ($ovlf(?lsd_6 c) & $length(?lsd_6 c,?lsd_2 l)
   & i+2*n<=l & $has_square(?lsd_6 c, ?lsd_2 i, ?lsd_2 n)) => 6*n<=l":
def squ4g "?lsd_6 $good(c) & $squ3(c) & $squ3b(c)":
\end{verbatim}
The resulting automaton is depicted in Figure~\ref{fig7}.
\begin{figure}[H]
\begin{center}
\includegraphics[width=6.5in]{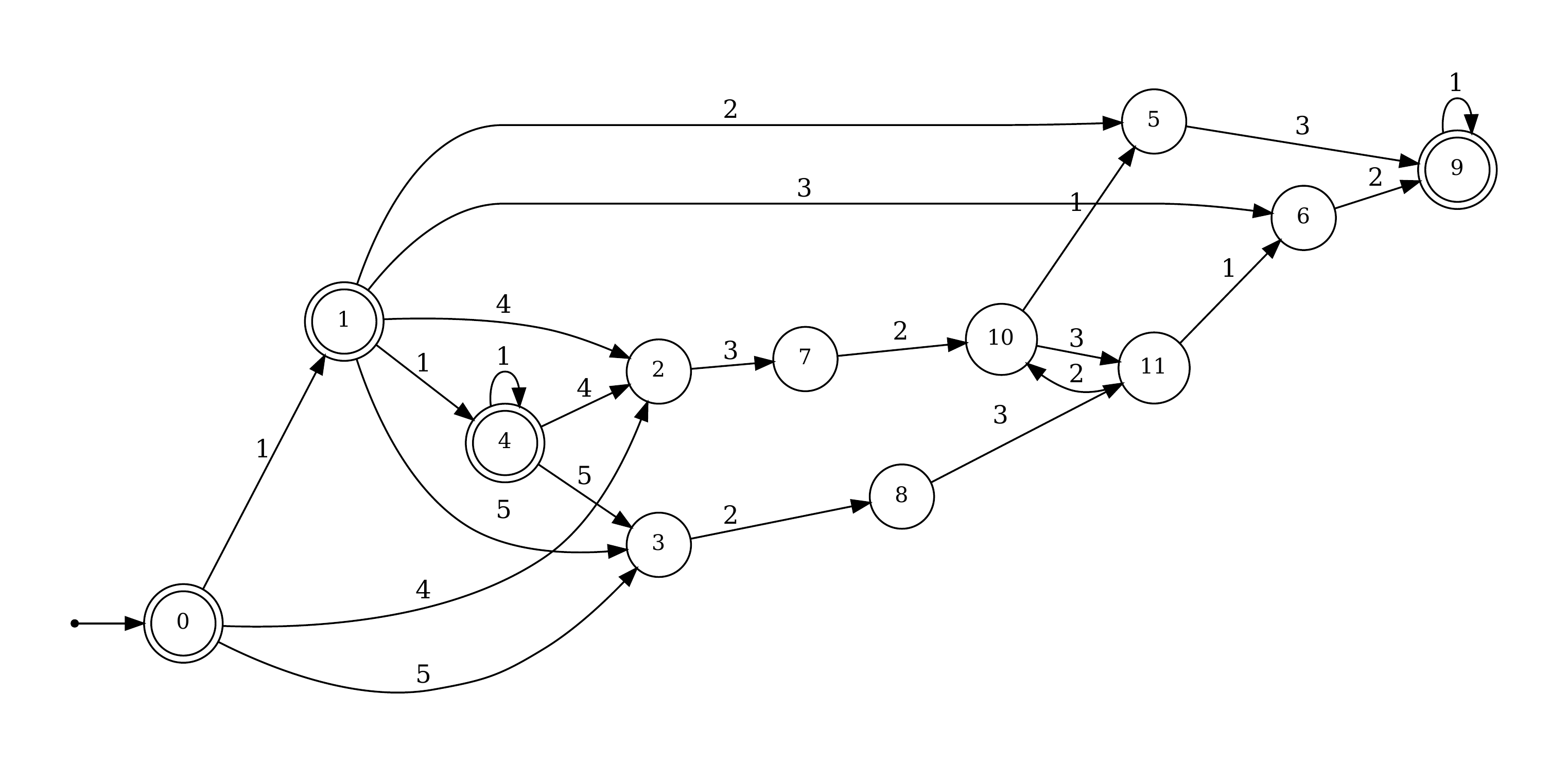}
\end{center}
\caption{Codes for length-$l$ overlap-free words with largest square
of order $l/6$.}
\label{fig7}
\end{figure}
In particular, the code sequence $4(32)^i13$ has length $6 \cdot 4^i$
and has largest square of order $4^i$.
\end{proof}

We can prove a similar, but weaker bound, for the larger class of
all Restivo words:
\begin{theorem}
Every finite Restivo word of length $l>8$ contains a square
of order $\geq (l+2)/7$.  Furthermore, the bound is best possible,
in the sense that there are arbitrarily large
overlap-free words for which the largest square is of order exactly $(l+2)/7$.
\end{theorem}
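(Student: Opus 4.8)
The plan is to reuse, almost verbatim, the machinery of the preceding theorem, changing only the extremal constant from $l/6$ to $(l+2)/7$ and---crucially---quantifying the lower bound over all Restivo words, coded by \texttt{validcode}, rather than only the overlap-free ones. For the lower bound I would reuse the \texttt{has\_square} automaton and ask {\tt Walnut} to confirm

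\begin{verbatim}
eval squR "?lsd_2 Ac,l ($validcode(?lsd_6 c)
   & $length(?lsd_6 c,?lsd_2 l) & l>8)
   => Ei,n i+2*n<=l & 7*n>=l+2
      & $has_square(?lsd_6 c,?lsd_2 i,?lsd_2 n)":
\end{verbatim}

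expecting the output \texttt{TRUE}. Using \texttt{validcode} in place of \texttt{ovlf} is exactly what turns this into a statement about the full class of Restivo words, as the theorem demands.

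For the ``best possible'' half I would follow the \texttt{squ3}/\texttt{squ3b}/\texttt{squ4g} template, written with \texttt{ovlf} as the statement requires and with the square-order constraints $6n=l$ and $6n\le l$ replaced by $7n=l+2$ and $7n\le l+2$:

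\begin{verbatim}
def squR3 "?lsd_2 Ei,n,l $ovlf(?lsd_6 c)
   & $length(?lsd_6 c,?lsd_2 l) & i+2*n<=l & 7*n=l+2
   & $has_square(?lsd_6 c, ?lsd_2 i, ?lsd_2 n)":
def squR3b "?lsd_2 Ai,n,l ($ovlf(?lsd_6 c)
   & $length(?lsd_6 c,?lsd_2 l) & i+2*n<=l
   & $has_square(?lsd_6 c, ?lsd_2 i, ?lsd_2 n)) => 7*n<=l+2":
def squR4g "?lsd_6 $good(c) & $squR3(c) & $squR3b(c)":
\end{verbatim}

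Here \texttt{squR3} asserts that a square of order exactly $(l+2)/7$ occurs and \texttt{squR3b} that no square exceeds that order, so \texttt{squR4g} recognizes precisely the codes of overlap-free words whose largest square has order $(l+2)/7$. I would then read an infinite family off \texttt{squR4g} by locating an accepting ``lasso''---a cycle reachable from the start feeding an accepting tail that is not an infinite run of $1$'s---extract the corresponding parametrized code (something of the same flavour as the witness $4(32)^i13$ from the previous theorem), and verify by hand that its length satisfies $l\equiv5\pmod 7$, so that $(l+2)/7$ is an integer, and that its largest square has order exactly $(l+2)/7$.

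The step I expect to be the real obstacle is this last one, for a concrete reason. Overlap-free words already obey the \emph{stronger} bound of the previous theorem, a square of order $\ge l/6$, and $l/6>(l+2)/7$ once $l>12$; hence the largest square in any overlap-free word of length $l>12$ strictly exceeds $(l+2)/7$. One must therefore check very carefully whether \texttt{squR4g}, built with \texttt{ovlf}, admits any infinite accepting path at all or instead collapses to a few short codes---that is, whether the extremal overlap-free family genuinely persists to arbitrarily large lengths rather than being a short-length coincidence. Settling this is the crux of the tightness claim; a secondary difficulty is the sheer size of the \texttt{squR3b} computation, where a universal quantifier over all squares is composed with \texttt{has\_square}, a combination that in the analogous previous proof already produced intermediate automata with millions of states.
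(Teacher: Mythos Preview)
Your lower-bound half is essentially identical to the paper's: the paper runs the same \texttt{validcode}-quantified formula (there called \texttt{squaresin}) and reports \texttt{TRUE}.

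For the tightness half, however, you have correctly diagnosed a fatal obstruction and then stopped short of acting on it. Your own argument is airtight: by the preceding theorem every overlap-free word of length $l>12$ has a square of order at least $l/6>(l+2)/7$, so \texttt{squR4g} built with \texttt{ovlf} cannot accept any code with $l>12$, and no infinite accepting lasso exists. Thus the approach as written fails, exactly for the reason you flagged.

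The paper resolves this by reading ``overlap-free'' in the second sentence of the statement as a slip for ``Restivo'': its \texttt{squr3}, \texttt{squr3b}, \texttt{squr4g} are your three definitions with \texttt{ovlf} replaced by \texttt{validcode}. The resulting automaton does have an infinite accepting family, and the paper reads off the explicit witnesses $5^{\,i}312$. Note that these are \emph{not} overlap-free (already $R(55)=g(5)\mu(g(5))=11\cdot1010=111010$ contains the cube $111$), confirming that the extremal family lives among general Restivo words, not overlap-free ones. So the fix is simply to swap \texttt{ovlf} for \texttt{validcode} in your \texttt{squR3}/\texttt{squR3b} and then extract the witness family from the resulting automaton; your worry about the size of the \texttt{squR3b} computation turns out not to be prohibitive in practice.
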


\begin{proof}
For the first statement we use
\begin{verbatim}
eval squaresin "?lsd_2 Ac,l ($validcode(?lsd_6 c) & $length(?lsd_6 c,?lsd_2 l)
   & l>8) => Ei,n i+2*n<=l & 7*n>=l+2 & $has_square(?lsd_6 c,?lsd_2 i,?lsd_2 n)":
\end{verbatim}
which evaluates to {\tt TRUE}.

For the second we construct an automaton accepting those code sequences
$c$ for which the largest square is of order exactly $(l+2)/7$.
\begin{verbatim}
def squr3 "?lsd_2 Ei,n,l $validcode(?lsd_6 c) & $length(?lsd_6 c,?lsd_2 l)
   & i+2*n<=l & 7*n=l+2 & $has_square(?lsd_6 c,?lsd_2 i,?lsd_2 n)":
def squr3b "?lsd_2 Ai,n,l ($validcode(?lsd_6 c) & $length(?lsd_6 c,?lsd_2 l)
   & i+2*n<=l & $has_square(?lsd_6 c,?lsd_2 i,?lsd_2 n)) => 7*n<=l+2":
def squr4g "?lsd_6 $good(c) & $squr3(c) & $squr3b(c)":
\end{verbatim}
The resulting automaton is depicted in Figure~\ref{fig9}.
\begin{figure}[H]
\begin{center}
\includegraphics[width=6.5in]{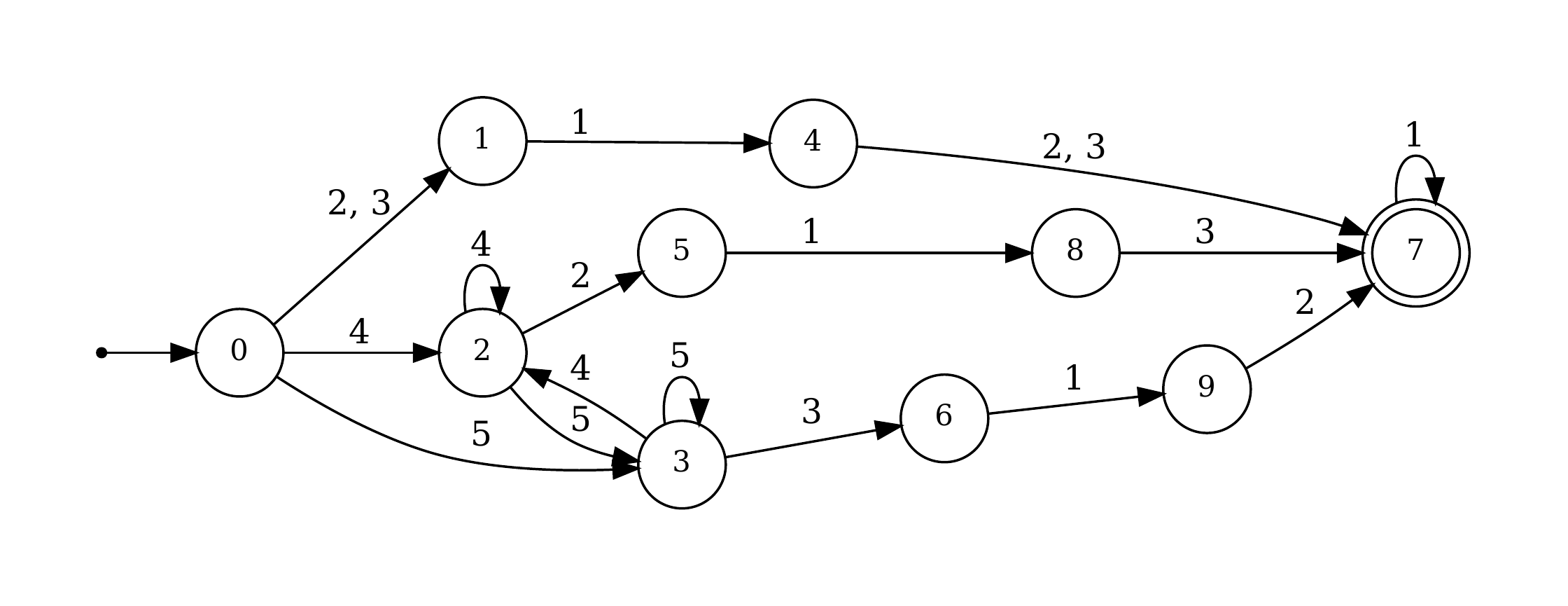}
\end{center}
\caption{Codes for length-$l$ Restivo words with largest square
of order $(l+2)/7$.}
\label{fig9}
\end{figure}
As you can see, code words of the form $5^i 312$ achieve the bound.
\end{proof}

As we have seen, not all code sequences result in overlap-free or ${7\over3}$-power-free words.   If we consider all code sequences, however, then
we can prove the following new result:
\begin{theorem}
\leavevmode
\begin{itemize}
\item[(a)]
Every (one-sided right) infinite word coded by a member
of $\{1,2,3,4,5\}^\omega$ is $4$th-power-free.
\item[(b)]
Furthermore, this bound is best possible, in the sense that
for each exponent $e < 4$ there is an infinite word
coded by a code in $\{1,2,3,4,5\}^\omega$
having a critical exponent $>e$.
\end{itemize}
\end{theorem}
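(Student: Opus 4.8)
The plan is to handle the two parts with different tools: part (a) by a finite reduction followed by a single \texttt{Walnut} evaluation, and part (b) by exhibiting an explicit family of codes and analyzing it by hand.

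\emph{Part (a).} The key observation is that a $4$th power is a \emph{finite} factor, so it suffices to rule it out in the finite words $R(c_0\cdots c_t)$. Indeed, by definition $R(c_0\cdots c_t)=g(c_0)\mu(g(c_1))\cdots\mu^t(g(c_t))$ is exactly the block-aligned prefix of $R(c_0c_1\cdots)$ ending at the end of block $t$, and these prefixes exhaust the infinite word. Hence any $4$th power occurring in $R(c_0c_1\cdots)$ already lies inside $R(c_0\cdots c_t)$ for $t$ large enough, and since each prefix code $c_0\cdots c_t$ lies in $\{1,2,3,4,5\}^*$ it is enough to prove that \emph{no} finite Restivo word contains a $4$th power. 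I would verify this mechanically with the \texttt{LOOK} automaton by evaluating the sentence asserting that, for every good valid code $c$ of encoded length $l$, there are no $i$ and $n\ge 1$ with $i+4n\le l$ and
\[
\forall t\ (t<3n)\implies \mathtt{LOOK}[c][i+t]=\mathtt{LOOK}[c][i+t+n].
\]
The guard $i+4n\le l$ forces every queried index to be strictly below $l$, so \texttt{LOOK} returns a genuine bit (never the out-of-range marker $2$) and no spurious match can be manufactured at the boundary. I expect this sentence to evaluate to \texttt{TRUE}.

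\emph{Part (b).} For the lower bound I would produce, for each $k\ge 0$, a single code realizing a factor of exponent $4-2^{-k}$; since $2^{-k}\to 0$ this beats any prescribed $e<4$. The family is the infinite code ${\bf c}^{(k)} = 3^k\,4\,4\,4\,4\cdots \in\{1,2,3,4,5\}^\omega$, whose word has $R(3^k 4 4)$ as a prefix. Writing $P_k=g(3)\,\mu(g(3))\cdots\mu^{k-1}(g(3))$ (so $|P_k|=2^k-1$), the crucial identity is $\mu^k(0)=0\,P_k$, proved by the one-line induction $\mu^{k+1}(0)=\mu^k(0)\mu^k(1)=0\,P_k\,\mu^k(1)=0\,P_{k+1}$. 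Using it, the first $4\cdot 2^k-1$ letters of $R(3^k44)$ — namely $P_k$, then $\mu^k(00)=(0P_k)^2$, then the length-$2^k$ prefix $0P_k$ of $\mu^{k+1}(00)$ — regroup exactly as $(P_k 0)^3 P_k$, a word of period $2^k$ and length $4\cdot 2^k-1$, and thus a factor of exponent at least $4-2^{-k}$. (For $k=0$ this is the cube $000$ inside $R(44)$; for $k=1,2$ it is a $7/2$-power in $R(344)$ and a $15/4$-power in $R(3344)$.) A short check shows this factor cannot be extended to the right, but that is not needed: the period $2^k$ already certifies $\ce\bigl(R({\bf c}^{(k)})\bigr)\ge 4-2^{-k}$.

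\emph{Main obstacle.} In part (a) the only delicate point is the boundary handling, i.e.\ being certain that the out-of-range value $2$ cannot masquerade as a repeated letter; the guard $i+4n\le l$ settles this, so the remaining work is just the (possibly large) automaton computation. The genuinely creative step is part (b): finding the family $3^k 4\,4$ and recognizing that the length-$(2^k-1)$ block $P_k$ is precisely $\mu^k(0)$ with its leading $0$ deleted, which is exactly what lets the cube $(\mu^k(0))^3$ be extended on the left by almost a full period to reach exponent $4-2^{-k}$. Establishing the identity $\mu^k(0)=0P_k$ together with the regrouping into $(P_k0)^3P_k$ is the crux of the argument; individual instances can be cross-checked in \texttt{Walnut} by asking the \texttt{LOOK} automaton for a $(4-2^{-k})$-power in the code $3^k44$.
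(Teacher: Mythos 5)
Your proposal is correct, and it splits cleanly into one part that matches the paper and one that genuinely diverges. Part (a) is essentially the paper's proof: the paper evaluates the \texttt{Walnut} sentence \texttt{fourthr}, which asserts the existence of $i$ and $n\geq 1$ with $\texttt{LOOK}[c][i+t]=\texttt{LOOK}[c][i+n+t]$ for all $t<3n$, and gets \texttt{FALSE}; the only cosmetic difference is the range guard --- the paper writes $i+3n\leq l$ and relies on the out-of-range value $2$ at position $i+t+n$ to kill any boundary match, while you impose $i+4n\leq l$ directly, and the two are equivalent. (Both arguments tacitly use the reduction you spell out explicitly, that every finite factor of $R(c_0c_1\cdots)$ lies inside some finite Restivo prefix $R(c_0\cdots c_t)$.) Part (b) is where you take a different route. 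The paper stays mechanical: it defines a predicate \texttt{maxexp} that, for a code of length $t$ (located via \texttt{lastnzcode}), asks for a factor of length $2^t-1$ with period $2^{t-2}$, and then reads off from the resulting automaton \emph{all} families of codes attaining exponent $4(1-2^{-t})$, namely $2^{t-2}5\{3,5\}$, $3^{t-2}4\{2,4\}$, $42^{t-3}5\{3,5\}$, and $53^{t-3}4\{2,4\}$. Your witness family $3^k44$ is exactly the paper's $3^{t-2}4\{2,4\}$ with $k=t-2$, but you prove it by hand via the identity $\mu^k(0)=0P_k$ and the regrouping of $P_k\,\mu^k(00)\,(0P_k)$ into $(P_k0)^3P_k$; the induction and the instances $k=0,1,2$ check out, and the exponent $(4\cdot 2^k-1)/2^k=4-2^{-k}=4(1-2^{-t})$ matches the paper's bound. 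What each approach buys: yours is self-contained and machine-free, making the structural reason for the bound transparent (the prefix $\mu^k(0)$ shorn of its leading $0$ extends the cube $(\mu^k(0))^3$ on the left by almost a full period); the paper's computation additionally certifies that the four listed families are the \emph{only} length-$t$ codes achieving the extremal exponent --- extra information your single family cannot supply, though none of it is needed for statement (b) as posed.
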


\begin{proof}
\leavevmode
\begin{itemize}
\item[(a)]
We can check this with {\tt Walnut} as follows:
\begin{verbatim}
eval fourthr "?lsd_2 Ei,n,l,c $validcode(?lsd_6 c) & 
   $length(?lsd_6 c,?lsd_6 l) & n>=1 & i+3*n<=l & At (t<3*n)
   => LOOK[?lsd_6 c][i+t]=LOOK[?lsd_6 c][i+n+t]":
\end{verbatim}
This asserts the existence of a $4$th power, and returns {\tt FALSE}, so
no fourth power exists.

\item[(b)]
This requires a little more work.   What we do is show that
for all finite codes of length $t \geq 3$, there is a code
resulting in a word having a factor of length $2^t-1$ with
period $2^{t-2}$, and hence an exponent of $4(1 - 2^{-t})$.
\begin{verbatim}
reg lastnzcode lsd_6 lsd_2 "([1,0]|[2,0]|[3,0]|[4,0]|[5,0])*
   ([1,1]|[2,1]|[3,1]|[4,1]|[5,1])[0,0]*":
# last bit x with a nonzero code
# input is c,x
def maxexp "?lsd_2 Ex,l,i $lastnzcode(?lsd_6 c,?lsd_2 x) &
   $length(?lsd_6 c,?lsd_2 l) & i+2*x+2<=l+1 & At (t<3*x/2-1)
   => LOOK[?lsd_6 c][i+t]=LOOK[?lsd_6 c][i+t+x/2]":
\end{verbatim}
The resulting automaton is depicted in Figure~\ref{fig4}.
\begin{figure}[H]
\begin{center}
\includegraphics[width=5.5in]{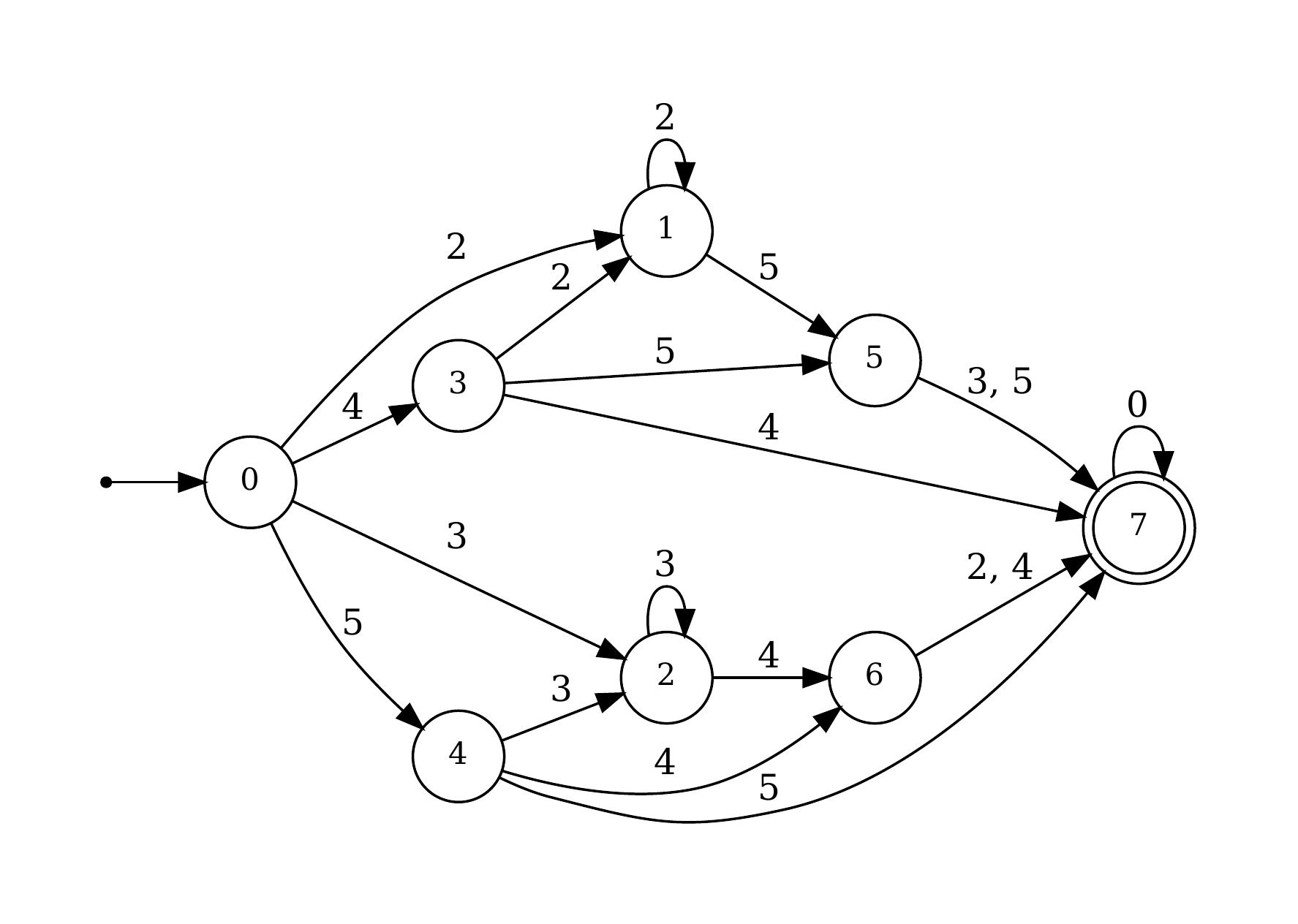}
\end{center}
\caption{Codes for words of critical exponent close to $4$.}
\label{fig4}
\end{figure}
From this, we see that the codes of length $t$ specifying
a word with critical exponent at least $4(1-2^{-t})$ are
$$ 2^{t-2}5\{3,5\}, 3^{t-2}4\{2,4\}, 42^{t-3}5\{3,5\}, 53^{t-3}4\{2,4\}.$$
\end{itemize}
\end{proof}

\section{Enumeration}
As discussed in several previous papers (e.g., \cite{Charlier&Rampersad&Shallit:2012,Du&Mousavi&Schaeffer&Shallit:2016}) the automaton-based technique
can also be used to {\it enumerate}, not simply decide, certain aspects of
sequences.   

Here we will use these ideas to enumerate the
``irreducibly extensible words'' of Kobayashi \cite{Kobayashi:1988}:
these are binary words $x$ such that there exists an infinite binary
word $\bf y$ such that $x{\bf y}$ is overlap-free.
For example, it is easily checked that
$010011001011010010$ is extendable,
but $010011001011010011$ is not (every extension by a word of length $7$
gives an overlap).
Denote
the number of such words as $E(n)$.  Table~\ref{en} gives the
first few values of this sequence.
\begin{table}[H]
\begin{center}
\begin{tabular}{c|cccccccccccccccc}
$n$ & 1 & 2 & 3 & 4 & 5 & 6 & 7 & 8 & 9 & 10 & 11 & 12 & 13 & 14 & 15 & 16 \\
\hline
$E(n)$ & 2& 4& 6& 10& 14& 18& 22& 26& 32& 36& 40& 44& 48& 52& 58& 64
\end{tabular}
\end{center}
\caption{First few values of $E(n)$.}
\label{en}
\end{table}
\noindent This is sequence \seqnum{A356959} in the
{\it On-Line Encyclopedia of Integer Sequences} \cite{Sloane}.

We can now obtain the following result of Kobayashi \cite{Kobayashi:1988}:
\begin{theorem}
$E(n) = \Theta(n^c)$, for $c \doteq 1.15501186367066470321$.
\end{theorem}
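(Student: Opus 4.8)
The plan is to recast the quantity $E(n)$ as the output of a linear representation in base~$2$, and then to read off the exponent $c$ from the dominant eigenvalue of the governing matrix. First I would express extensibility inside our automaton framework. A binary word $x$ is irreducibly extensible precisely when it is a prefix of some infinite overlap-free word, and by K\"onig's lemma this holds iff $x$ is a prefix of arbitrarily long \emph{finite} overlap-free words. This last condition is exactly of the shape already handled by the \texttt{changebits} construction: using the DFAO \texttt{LOOK}, the predicate \texttt{ovlf}, and \texttt{length}, one writes a formula with the idiom ``$\forall z\ \exists d,y$ with $y\ge z$, the code $c$ is a prefix of $d$, $R(d)$ is overlap-free, and $|R(d)|=y$'', asserting that the word coded by $c$ extends to arbitrarily long overlap-free words. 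Because the Restivo--Salemi factorization of Theorem~\ref{restivo} is essentially unique for overlap-free words, I would normalize to a canonical code, so that for each $n$ the extensible words of length $n$ correspond bijectively to accepted codes whose encoded length equals $n$; this makes $E(n)$ a counting function of the kind the enumeration machinery accepts.

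Next I would apply the automaton-based enumeration technique of \cite{Charlier&Rampersad&Shallit:2012,Du&Mousavi&Schaeffer&Shallit:2016} to this predicate to obtain a linear representation $(u,M_0,M_1,v)$, i.e.\ a row vector $u$, a column vector $v$, and square matrices $M_0,M_1$ with $E(n)=u\,M_{d_1}M_{d_2}\cdots M_{d_\ell}\,v$, where $d_1 d_2\cdots d_\ell$ is the base-$2$ expansion of $n$. Equivalently, $(E(n))_{n\ge 0}$ is a $2$-regular sequence. The existence of such a representation is guaranteed mechanically once \texttt{ext} is built; the doubling symmetry underlying it is the observation, already exploited throughout Section~\ref{decision}, that advancing the code corresponds to applying $\mu$, so that the map $n\mapsto 2n$ on lengths is implemented by a single matrix (the ``bulk'' transition) and the boundary digits contribute only bounded corrections.

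From here the analysis is purely linear-algebraic. For $n$ with $\ell$ binary digits, $E(n)$ is a product of $\ell$ matrices drawn from $\{M_0,M_1\}$, so the growth rate is set by the spectral behavior of these matrices; I would set $c=\log_2\lambda$, where $\lambda$ is the dominant eigenvalue of the reduced bulk matrix after quotienting out the transient (nilpotent) subspace on which the representation acts degenerately. The upper bound $E(n)=O(n^{c})$ then follows from bounding the operator norms of all digit products by $O(\lambda^{\ell})$, and the matching lower bound $E(n)=\Omega(n^{c})$ from a Perron--Frobenius argument: the reduced matrix is nonnegative and primitive, reflecting the combinatorial fact that an extensible word always admits a further extension, so the count can never collapse below $\lambda^{\ell}$ up to a constant. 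Finally $\lambda=2^{c}$ is identified as the largest root of the characteristic polynomial of the reduced matrix, and numerical evaluation yields $c\doteq 1.15501186367066470321$.

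The hard part will be the two-sided bound, i.e.\ obtaining a genuine $\Theta(n^{c})$ rather than a gap between the $\liminf$ and $\limsup$ exponents. This is a real phenomenon to rule out: for the count of \emph{all} overlap-free words the normalized growth $E(n)/n^{c}$ does not converge, and the lower and upper exponents differ (Cassaigne~\cite{Cassaigne:1993}); that the exponent here is the single value $c$ is exactly what distinguishes the extensible words. Establishing it amounts to showing that the reduced transition structure has a \emph{simple} dominant eigenvalue with a nonnegative eigenvector, no nontrivial Jordan block (which would inject spurious $\log n$ factors), and that $u$ and $v$ pair nondegenerately with the dominant eigendirection regardless of the trailing digit pattern of $n$; once these facts are verified from the explicit matrices produced by \texttt{Walnut}, the stated asymptotic follows.
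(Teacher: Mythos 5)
There are two genuine gaps here, one in the counting setup and one in the asymptotic analysis.

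First, the object you count is wrong. Your K\"onig's-lemma reformulation of extensibility and the ``$\forall z\,\exists d,y$'' predicate are fine in spirit, but you then posit a bijection between extensible words of length $n$ and canonical accepted codes \emph{whose encoded length equals $n$}. Extensible words need not be Restivo words at all: a Restivo word of length $3$ must have the form $b_0\,\mu(b_1)$ with $b_0,b_1\in\{0,1\}$, since $3=1+1\cdot 2$ is the only way to write $3$ as $\sum_i a_i 2^i$ with digits $a_i=|g(c_i)|\in\{0,1,2\}$; so the Restivo words of length $3$ are exactly $001,010,101,110$, and their last two letters always differ. Yet $011$ and $100$ are prefixes of $\bf t$ and $\overline{\bf t}$, hence extensible, and indeed $E(3)=6$, not $4$ --- your count undercounts. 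The appeal to ``essential uniqueness'' of the Restivo--Salemi factorization is also unjustified (it is not unique; e.g., trailing code letters $1$, coding $\epsilon$, change nothing), and uniqueness would not repair the bijection anyway, because distinct coded words can share the same length-$n$ prefix: deduplication must happen at the level of \emph{prefixes}, not whole words. The paper's \texttt{mincode} does exactly this: it accepts a code $c_1$ iff $R(c_1)$ is overlap-free of length at least $n$ and $c_1$ is lexicographically least (as a base-$6$ integer) among all codes of overlap-free words agreeing with $R(c_1)$ on positions $0$ through $n-1$; the number of accepted codes is then the number of distinct length-$n$ prefixes, which is $E(n)$.

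Second, your spectral argument is both harder than necessary and, as written, invalid at the upper bound. The growth of mixed products $M_{d_1}\cdots M_{d_\ell}$ is governed by the \emph{joint} spectral radius of $\{M_0,M_1\}$, which can strictly exceed the dominant eigenvalue $\lambda$ of any single ``bulk'' matrix; controlling it is precisely the delicate issue in counting overlap-free words \cite{Jungers&Protasov&Blondel:2009}, so ``bounding the operator norms of all digit products by $O(\lambda^\ell)$'' does not follow from per-matrix spectral data. Likewise, the Perron--Frobenius hypotheses you defer (primitivity after quotienting, a simple dominant eigenvalue, nondegenerate pairing for every trailing digit pattern) are the entire difficulty, not a routine verification. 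The paper sidesteps all of this with a sandwich argument that needs only one matrix: from the rank-$57$ linear representation, the minimal polynomial of $\gamma(0)$ is $X^4(X^4-2X^3-X^2+2X-2)(X-1)^2(X+1)^2$, with dominant zero $\zeta=\bigl(1+\sqrt{5+4\sqrt{3}}\,\bigr)/2$, so $E(2^k)\sim\alpha\,\zeta^k$; and since $E$ is strictly increasing, $2^k\le n<2^{k+1}$ gives $E(2^k)\le E(n)\le E(2^{k+1})$, hence $E(n)=\Theta(n^c)$ with $c=\log_2\zeta$. Monotonicity of $E$ is the idea you were missing: it converts the powers-of-two asymptotic into the two-sided bound for all $n$, with no joint-spectral analysis and no need for $E(n)/n^c$ to converge (it need not, which is consistent with the phenomenon you correctly flag from \cite{Cassaigne:1993}).
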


\begin{proof}
In order to carry out the enumeration, we need to create a first-order
logical formula asserting that $c$ is a code for an overlap-free sequence
of length at least $n$, and also that $c$ is lexicographically first
with this property
in some appropriate order.
The easiest lexicographic order results from interpreting
$c$ as a number in base $6$.  Then, counting the number of such codes
corresponding to each $n$ gives $E(n)$.  
We can carry this out with the following
{\tt Walnut} code:
\begin{verbatim}
def agrees "?lsd_2 At (t<b) => LOOK[?lsd_6 c1][t]=LOOK[?lsd_6 c2][t]":
def prefixequal "?lsd_2 El,m $length(?lsd_6 c1,?lsd_2 l) & 
   $length(?lsd_6 c2,?lsd_2 m) & l>=n & m>=n & 
   $agrees(?lsd_2 n,?lsd_6 c1, ?lsd_6 c2)":
def mincode "?lsd_2 El $ovlf(?lsd_6 c1) & $length(?lsd_6 c1, ?lsd_2 l) &
   l>=n & Ac2 ($prefixequal(?lsd_6 c1,?lsd_6 c2,?lsd_2 n) &
   $ovlf(?lsd_6 c2)) => (?lsd_6 c1<=c2)":
def minmat n "$mincode(?lsd_6 c,?lsd_2 n)":
\end{verbatim}
    Here {\tt Walnut} returns a so-called {\it linear representation\/}
for $E(n)$:  this consists of a row vector $v$, a matrix-valued
morphism $\gamma$, and a column vector $w$ such that
$E(n) = v \gamma(x) w$ if $x$ is a binary word with $[x]_2 = n$.
(For more about linear representations, see the 
book \cite{Berstel&Reutenauer:2011}.)  The {\it rank\/} of a linear
representation is the dimension of the vector $v$; in this case it is
$57$.    With this linear representation in hand, we can compute $E(n)$ very
rapidly even for large $n$.

The linear representation also can give us information about the
asymptotic behavior of $E(n)$.  To do so, it suffices to compute
the minimal polynomial of the matrix $\gamma(0)$ with a computer
algebra system such as {\tt Maple}; it is 
$X^4 (X^4-2X^3-X^2+2X-2)(X-1)^2(X+1)^2$.  Here the dominant zero
is that of $X^4-2X^3-X^2+2X-2$, and it is 
$$\zeta = {{1+ \sqrt{5 + 4 \sqrt{3}}}\over 2} \doteq 2.22686154846556164.$$
It follows that $E(2^n) \sim \alpha \cdot \zeta^n$ for some constant $\alpha$;
since $E$ is strictly increasing, it follows that $E(n) = \Theta(n^c)$ 
for $c = \log_2(\zeta) \doteq 1.15501186367066470321$.
\end{proof}

\section{Going further}

All the needed {\tt Walnut} code can be downloaded from the website
of the second author, \\
	\centerline{\url{https://cs.uwaterloo.ca/~shallit/walnut.html} \ .}

\medskip

In principle, one can extend this work to the Salemi words, and
we were able to construct the needed lookup automaton, which has 124 states.
However, so far we have been unable to use it to do much that is useful with
it, because
of the very large sizes of the intermediate automata (at least hundreds of
millions of states).  We leave this as a problem for future work.

\end{document}